\newtheorem{theorem}{Theorem}
\newtheorem{lemma}{Lemma}
\newtheorem{definition}{Definition}
\newtheorem{proposition}{Proposition}
\newtheorem{corollary}{Corollary}
\newtheorem{remark}{Remark}
\newcommand{\cmark}{\ding{51}}%
\newcommand{\xmark}{\ding{55}}%
\begin{document}
\bstctlcite{bibparams}

\title{Stochastic Channel Models \\ for Satellite Mega-Constellations}

\author{Brendon~McBain,~\IEEEmembership{Member,~IEEE,}
        Yi~Hong,~\IEEEmembership{Senior Member,~IEEE,}
        and~Emanuele~Viterbo,~\IEEEmembership{Fellow,~IEEE}
\thanks{This research work was supported by the Australian Research Council (ARC) through the Discovery Project under Grant DP210100412.}%
\thanks{The authors are with the Department of Electrical and Computer
Systems Engineering, Monash University, Clayton, VIC 3800, Australia
(e-mail: brendon.mcbain@monash.edu, yi.hong@monash.edu, emanuele.viterbo@monash.edu).}%
}

\markboth{}%
{}

\maketitle

\begin{abstract}
A {general} satellite channel model is proposed for communications between a rapidly moving low Earth orbit (LEO) satellite in a mega-constellation and a stationary user on Earth. The {channel} uses a non-homogeneous binomial point process (NBPP) for modelling the satellite positions, marked with an { ascending/descending binary random variable for modelling the satellite directions}. Using the marked NBPP, we derive the probability distributions of {power} gain, propagation delay, and Doppler shift, resulting in a stochastic signal propagation model {for the mega-constellation geometry in isolation of other effects}. This forms the basis for our proposed channel model as a randomly time-varying channel. { The scattering function of this channel is {derived} to characterise how the received power is spread in the delay-Doppler domain.} Global channel parameters such as path loss and channel spread are analysed in terms of the scattering function. { The channel statistics and the global channel parameters closely match realistic orbit simulations of the Starlink constellation.}
\end{abstract}

\begin{IEEEkeywords}
LEO satellites, satellite networks, stochastic geometry, binomial point process, time-varying channel.
\end{IEEEkeywords}

\IEEEpeerreviewmaketitle

\section{Introduction}
\IEEEPARstart{I}n recent years, a growing number of satellites have been launched into {\em low Earth orbit (LEO)} to form mega-constellations aimed at expanding global Internet coverage \cite{AlHraishawi2023}. However, the high mobility of LEO satellites introduces unique challenges to communication systems \cite{Vatalaro1995}, such as high latency and high Doppler shifts, that are not yet fully understood. Compared to traditional mobile communications, the theoretical understanding of these satellite networks is still in its early stages. As the development of {\em sixth-generation (6G)} wireless networks approaches, there is a pressing need to explore optimal design practices for reliable communication in non-terrestrial satellite networks and their integration with terrestrial networks. A key step toward this goal is developing accurate channel models for satellite networks. This paper focuses on stochastic satellite channel models, with an emphasis on capturing the high mobility of satellites in mega-constellations.


For modelling the communications channel between a user and a LEO mega-constellation, we must model the satellite positions { on their orbit trajectories}. Since there are {hundreds or thousands of} rapidly moving satellites in a mega-constellation, stochastic models like the {\em point process (PP)} are appropriate for tractable descriptions of the satellite positions. In \cite{Okati2020a, Okati2020b}, Okati et al. proposed the {\em binomial point process (BPP)} to model $N$ total satellites as independent and uniformly distributed points on a satellite sphere. Since the BPP did not capture the varying density of satellites at different latitudes, a correction factor was included in the total number of satellites to give the effective number of satellites at different user latitudes. In \cite{Okati2021,Okati2022}, Okati et al. proposed the {\em non-homogeneous Poisson point process (NPPP)} to model the satellites using the probability density of satellites across latitudes, still assuming uniformity across longitudes. This model did not fix the total number of satellites, only the satellite density. In \cite{Okati2023}, the BPP model with the correction factor was extended to include discrete altitudes by distributing the satellites across multiple satellite spheres. {These models do not include satellite directions along their orbits, which is the extension that will addressed in this paper.}


Modelling the satellite positions is sufficient for modelling the {power gain} due to moving satellites. However, it can only give a channel model of satellite communications after coherent detection, ignoring propagation delay and Doppler effects. This channel model was used to derive coverage probabilities and capacity bounds { for mega-constellations} \cite{Okati2020a,Okati2020b,Okati2021,Okati2022}. In \cite{Baeza2022}, a survey of thirteen channel models of {\em non-geostationary orbit (NGSO)} satellites was conducted, in which more than half of the channel models only considered link attenuation in various scenarios such as multipath or rain attenuation. { Link attenuation models are important and complement the mega-constellation models discussed earlier, since they must be combined for a complete channel model.} In this survey, it was remarked that almost none of the channel models considered Doppler effects. When Doppler effects between the user and the satellite are considered in the channel model, it is common to use the maximum Doppler shift to parameterise Jakes' Doppler spectrum \cite{ITU2019, 3GPP2020}. However, using this Doppler spectrum is a heuristic since it is not derived from the satellite mega-constellation geometry. { In contrast, the approach in this paper will characterise the Doppler shift based on a geometric model of mega-constellations. To highlight the novelty in this approach, the stochastic channel models surveyed in \cite{Baeza2022} are classified in Table \ref{table:channel_models} in terms of if the models are based on mega-constellation geometry or link attenuation, and if they include Doppler shift.}

\begin{table*}[]
    \centering
\begin{tabular}{lcccccl}\toprule
        Model type  & {Mega-constellation modelling} & Link modelling & Doppler shift inclusion \\\midrule
 BPP \cite{Okati2020a,Okati2020b,Okati2023} & \cmark & \cmark & \xmark \\
NPPP \cite{Okati2021,Okati2022} & \cmark & \cmark & \xmark \\
TDL \cite{3GPP2020}  & \xmark & \cmark & \cmark \\
Elevation-dependent \cite{Bischel1996,Enric2021} & \xmark & \cmark & \xmark \\
Small-scale fading \cite{ShadowedRician2003}  & \xmark & \cmark & \xmark\\
Rain fading \cite{Kanellopoulos2014}  & \xmark & \cmark & \xmark \\
Marked NBPP (proposed)  & \cmark & \cmark & \cmark \\
\bottomrule
\end{tabular}
\caption{A classification of the stochastic satellite channel models surveyed in \cite{Baeza2022}.}
\label{table:channel_models}
\end{table*}


In \cite{Ali1998}, the Doppler shift was characterised for a stationary user and a moving LEO satellite. This characterisation considered the time-dependent movement of the satellite on its orbit { under the assumption of a mega-constellation geometry with circular orbits}. On the other hand, stochastic Doppler characterisations using stochastic geometry have been studied in just a couple of scenarios. In \cite{Khan2020}, the Doppler distribution was derived for a single LEO satellite in a fixed position transmitting to a random user in a cell on Earth. In \cite{Seo2024}, the Doppler distribution was derived for a LEO satellite transmitting to an {\em unmanned aerial vehicle (UAV)} randomly moving in a swarm. { In \cite{AlHourani2024}, a closed-form distribution of a Doppler upper bound was derived for the scenario of a random LEO satellite from a mega-constellation transmitting to a stationary user. 
This scenario is the focus of this paper in the context of stochastic channel modelling, which will be shown to very accurately characterise the true Doppler shift distribution. In addition, the relationship between Doppler shift, propagation delay, and channel gain (inverse of free space path loss) is studied, which has not been considered in the literature for a stochastic model of the mega-constellation geometry.}

In satellite communications, many useful techniques exist to compensate for propagation delay and Doppler shift. 
{ Nonetheless, understanding the channel before compensation can aid the design and testing of compensation techniques. Moreover, there are many possible approaches to compensation and they are highly dependent on the configuration of the mega-constellation network, e.g., the particular inter-satellite handover strategy alone can greatly influence delay and Doppler shift~\cite{Voicu2024}. In addition, Doppler shift compensation is known to be challenging in cooperative satellite communications \cite{Caus2022,Caus2023}. This paper does not include compensation so that the scope of the results are not limited. However, the methods developed are sufficiently general to support the inclusion of an arbitrary compensation technique. For this reason, the channel model that will be proposed in this paper is ``general'' in that it can be simplified depending on the specific LEO communications system that is to be modelled.}

\subsection{Contributions}
{

The contributions in this paper are focused on providing a fully general description of stochastic models and developing the necessary theoretical tools to enable their application to any performance metric of interest. We further present a robust validation of these models, shedding light on both their strengths and weaknesses. In particular, we make the following contributions:

\begin{itemize}
    \item {\em Generalised stochastic mega-constellation model:} The BPP model in \cite{Okati2020a,Okati2020b} is extended to include a non-uniform latitude density using a {\em non-homogeneous BPP (NBPP)}, similarly to \cite{Okati2021,Okati2022} with the Poisson approximation. In addition, the NBPP is ``marked'' with a binary ascending/descending indicator variable so that the velocity vector of a random satellite position can be modelled. This novelty allows the point process to additionally model the Doppler shift due to satellite movement.

\item {\em Probability distributions of signal propagation properties:} The CDFs and PDFs of channel gain, delay, and Doppler are derived, where the latter solves an open problem from \cite{AlHourani2024}. This requires developing fundamental performance analysis tools, including the CDF of the user-satellite central angle (or elevation angle) $\sigma$ for a random satellite, denoted by $p_{\rm cap}(\sigma)$. The CDFs and PDFs of channel gain and delay are derived in terms of $p_{\rm cap}(\sigma)$, which is possible for any parameter that depends only on $\sigma$. Since this is not satisfied for Doppler shift, we derive a general integral for finding the CDF and PDF of parameters that depend on the satellite latitude and longitude.

\item {\em Stochastic channel model analysis}: The stochastic mega-constellation model and its probability distributions are used to form a satellite mega-constellation channel model. This channel model is completely described by the scattering function statistic, which is derived in terms of the delay-Doppler joint distribution and the average path loss. Since the scattering function is a fundamental property of stochastic channel models, it allows us to study the accuracy of the channel model jointly in terms of delay, Doppler, and channel gain, which had previously only been studied in terms of specific performance metrics based on channel gain.

\end{itemize}

Given the growing importance of stochastic modeling for performance analysis in satellite mega-constellation networks, these contributions address a key gap in the literature by providing valuable insights and analysis tools.

}

\subsection{Paper Outline} The organisation of the paper is as follows. In Section~II, the LEO orbits are modelled as circular orbits, which is the key assumption for the mega-constellation model in Section~IV. In Section~III, signal propagation properties (channel gain, propagation delay, and Doppler shift) are defined for a fixed satellite following the geometry of Section~II. These properties are derived in terms of spherical coordinates, ready for the statistical derivations in Section IV. In Section~IV, probability distributions of the signal properties are derived for a random satellite from the mega-constellation NBPP model. These distributions are used to analyse the stochastic channel model in Section~IV. In Section~V, the stochastic channel model is defined and analysed in terms of the derived scattering function, which depends on the distributions derived in Section~III.

\subsection{Preliminaries}

\subsubsection{Notation}

A random variable is denoted by a capital letter such as $X$. The probability that $X$ satisfies the event $\mathcal{A}$ is denoted by $\mathbb{P}(X\in\mathcal{A})$. The cumulative distribution function (CDF) of $X$ is $F_{X}(x) = \mathbb{P}(X \leq x)$, and the probability density function (PDF) of $X$ is $f_{X}(x) = \frac{d}{dx} F_X(x)$. The expectation of $g(X)$, for some function $g$, is $\mathbb{E}[g(X)] = \int_{\mathbb{R}} g(x) f_X(x) dx$. The Dirac delta distribution is denoted by $\delta(x)$ for $x\in\mathbb{R}$. A vector is denoted by a bold lower-case letter such as $\mathbf{u}$. The components of a three-dimensional vector $\mathbf{u}$ are denoted by $\mathbf{u} = (u_x, u_y, u_z)$. The unit vector of $\mathbf{u}$ is denoted by $\hat{\mathbf{u}} = \mathbf{u}/||\mathbf{u}||$. The indicator function $\mathbbm{1}\{x \in \mathcal{X}\}$ equals one if $x \in \mathcal{X}$ and zero otherwise. All numerical values are given to 3 significant figures.

\subsubsection{Coordinate systems}
The cartesian coordinate system in three dimensions is specified by the unit vectors $\hat{\mathbf{x}}$, $\hat{\mathbf{y}}$, and $\hat{\mathbf{z}}$.

The spherical coordinate system is used with coordinates $(R,\theta,\phi)$ according to the mathematics convention: $R$ is the radial distance, $\theta$ is the rotational angle (azimuthal angle) on the x-y plane, and $\phi$ is the polar angle (zenithal angle) relative to the z-axis. The {\em longitude} is $\theta$ in degrees, and the {\em latitude} is $\pi/2 - \phi$ in degrees. 

The {\em rotated} spherical coordinate system (or {\em local} coordinate system) is specified by the unit vectors
\begin{align}
    \begin{bmatrix}
           \hat{\mathbf{r}} \\
           \hat{\boldsymbol{\theta}} \\
           \hat{\boldsymbol{\phi}}
         \end{bmatrix} = \begin{bmatrix}
   \sin \phi \cos \theta & \sin \phi \sin \theta & \cos \phi  \\
    -\sin \theta & \cos \theta & 0\\
    \cos \phi \cos \theta & \cos \phi \sin \theta & - \sin \phi
\end{bmatrix} \begin{bmatrix}
           \hat{\mathbf{x}} \\
           \hat{\mathbf{y}} \\
           \hat{\mathbf{z}}
         \end{bmatrix}
\end{align}
relative to a reference point at spherical coordinates $(R, \theta, \phi)$. The radial unit vector $\hat{\mathbf{r}}$ is relative to the center of the sphere, and the angle unit vectors $\hat{\boldsymbol{\phi}}$ and $\hat{\boldsymbol{\theta}}$ are tangent to the latitude and longitude lines, respectively, forming a tangent plane at the reference point. 

\section{Satellite Constellation Geometry}

\begin{figure*}
    \centering
    \begin{subfigure}[b]{0.5\textwidth}
        \centering
        \includegraphics[trim={14.7cm 3cm 13cm 0},clip,width=0.95\textwidth]{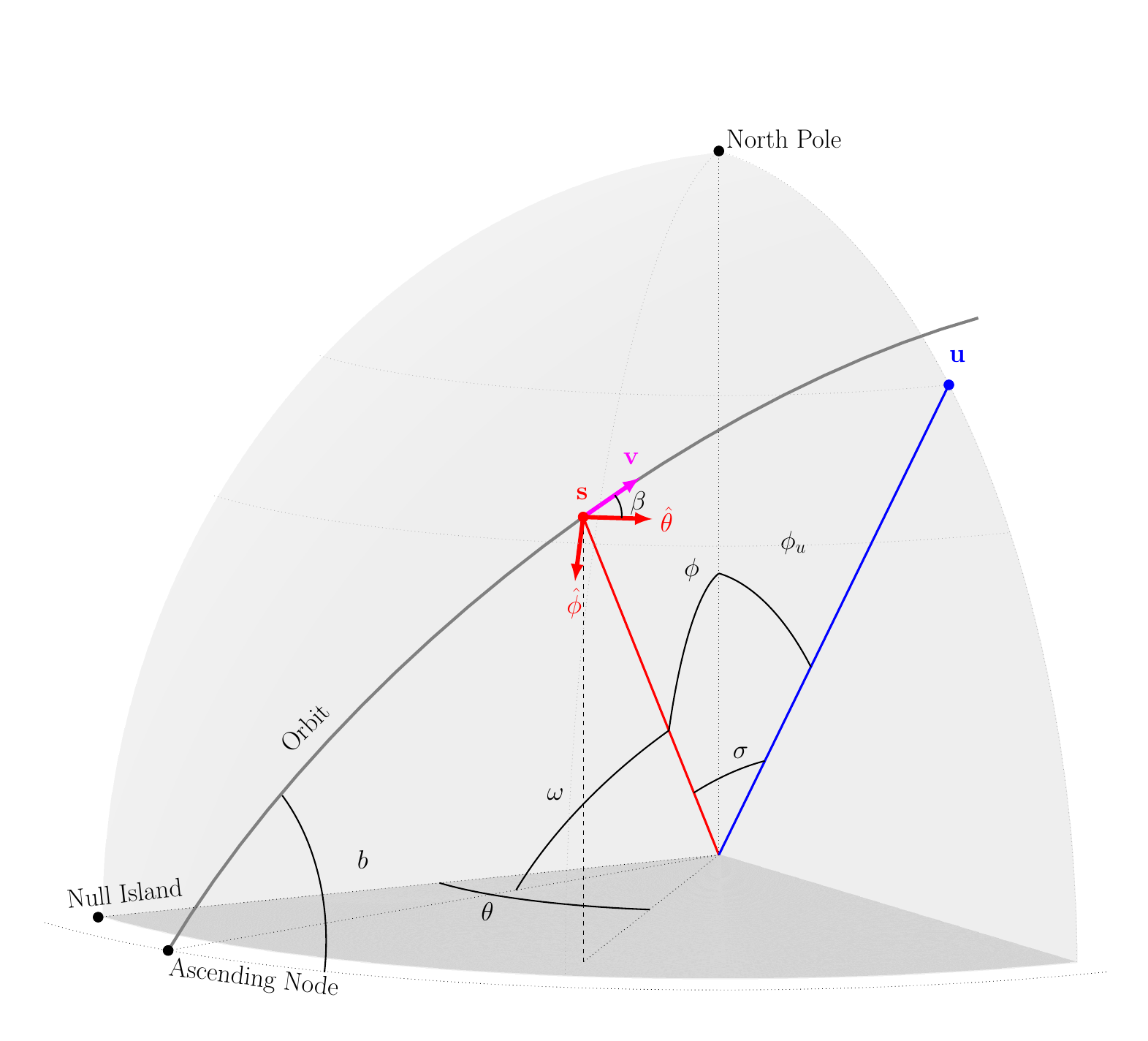}
        \caption{3D perspective view}
    \end{subfigure}%
    ~ 
    \begin{subfigure}[b]{0.5\textwidth}
        \centering
        \includegraphics[trim={1cm 0 1cm 0},clip,width=0.8\textwidth]{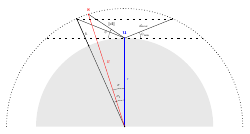}
        \caption{2D cross-sectional side view}
    \end{subfigure}
    \caption{The user and a serving satellite. (a) is a 3-dimensional view that captures the relative position of the user and the satellite orbit. (b) slices the Earth with a plane that intersects the user and the satellite.}
    \label{fig:user_sat}
\end{figure*}

\subsection{LEO Constellation Model}

Consider a {Walker-delta} mega-constellation for LEO satellites. For example, a single LEO satellite is illustrated in Fig. \ref{fig:user_sat}. All satellites are placed on an orbital shell that is a sphere of radius $R = r + h$, where $r$ is the radius of Earth and $h$ is the satellite altitude. We assume that each satellite is travelling at a constant speed $v$ m/s, tangentially to the Earth's surface. A satellite travels along its orbit in the easterly direction of increasing longitudes, referred to as a {\em prograde} orbit. The orbit of a satellite is specified by the position of its {\em ascending node} (the longitude on the equator that intersects the orbit) and the inclination angle $b$ (the angle between the equatorial plane and the orbital plane). The inclination angle restricts the magnitude of the latitude to be less than or equal to $b$. A mega-constellation is composed of multiple orbits with ascending nodes spaced at $s_{\rm{orb}}$ degrees. The position of a satellite on an orbit is specified by its {\em argument of latitude} $\omega$, which is the angle between the satellite and the ascending node on the inclined orbital plane. In the local coordinate system, the satellite travels with direction angle $\beta$ with respect to $\hat{\boldsymbol{\theta}}$ on the $\hat{\boldsymbol{\phi}}$-$\hat{\boldsymbol{\theta}}$ plane and varies with latitude. There are $N_{\rm{orb}}$ satellites on each orbit, with satellites spaced at $s_{\rm{sat}}$ degrees in terms of their argument of latitude, and there are $N = (360^{\circ}/s_{\rm{orb}}) N_{\rm{orb}}$ total satellites in the mega-constellation. 

The Walker-delta constellation can be simulated by choosing a time step $\Delta t$ that corresponds to increasing the argument of latitude in steps of $\Delta\omega = \omega_{\rm{sat}} \Delta t$, where $\omega_{\rm{sat}}=v/R$ is the satellite's angular velocity, which gives a sample period of $2\pi/\Delta\omega$. The satellite positions at each discrete time can be computed according to the circular orbit equations \cite{Lutz2000}, using the system parameters in Table \ref{tab:sys_params} for modelling the first and fourth orbital shells of Starlink. However, more realistic simulations can be performed using the {\em Simplified General Perturbations 4 (SGP4)} orbit model that considers perturbations (e.g., due to Earth's shape, drag, radiation, and gravitation effects from the sun and the moon) that cause the satellites to deviate from their average circular orbits. With this model, realistic orbits can be simulated using a {\em two-line element (TLE)} file of the constellation, which is publicly available for Starlink. From this TLE file, we simulate all satellites with parameters approximately equal to those in Table \ref{tab:sys_params} ($b \pm 1^\circ$ and $h \pm 50$ km). {The SGP4 simulations in this paper will use a starting epoch date of 2024/01/01 for $10$ intervals of $10$ minute durations ($100$ minutes in total).}

\begin{table}[]
    \begin{center}
    \begin{tabular}{cc}
\toprule
\textbf{Parameters} &  {\textbf{Values}}\\
\midrule

Satellite speed, $v$ & $7.29$ km/s\\
Carrier frequency, $f_c$ & $12.7$ GHz\\
Orbital inclination, $b$ & $53^\circ$\\
Altitude, $h$ & $550$ km\\
Num. of satellites, $N$  & $3168$\\
Num. of satellites per orbit, $N_{\rm{orb}}$ & $22$\\
Orbital spacing, $s_{\rm{orb}}$ &  $2.5^\circ$\\
\bottomrule
\end{tabular}
\end{center}
    \caption{System parameters based on the { first and fourth shells of the} Starlink mega-constellation \cite{Capez2023, Capez2024}. (The shells actually have orbital spacings $5^\circ$, with otherwise almost identical parameters, however since their ascending nodes are offset by $2.5^\circ$ we approximate them as one shell with a halved spacing).}
    \label{tab:sys_params}
\end{table}

\subsection{User-satellite Central Angle and Elevation Angle}

The angle between the user and the satellite on the plane that connects them is called the user-satellite {\em central angle} $\sigma$. This is a convenient parameter for expressing quantities that only depend on the user-satellite distance, forming a cone above the user with rotational symmetry. Without loss of generality, assume the user has a rotational angle of $\theta_{\rm{u}}=\pi/2$, putting the user on the y-z plane, then
\begin{align}
    \sigma = \cos^{-1}(\cos(\phi_{\rm{u}})\cos(\phi) + \sin(\phi_{\rm{u}})\sin(\phi)\sin(\theta))
\end{align}
which is often used for computing the {\em great-circle distance} between two points on a sphere. 

In addition, the angle between the user's horizon plane and the satellite is the {\em elevation angle $\psi$}, which is a parameter of the {\em topocentric coordinate system} that is commonly used since its parameters are directly observable by the user. The elevation angle $\psi$ relates to the central angle $\sigma$ through the user-satellite distance 
\begin{align}
    ||\mathbf{d}|| &= r\left[\sqrt{(R/r)^2 - \cos^2(\psi)}-\sin(\psi) \right] \\
    &= \sqrt{ r^2 +  R^2 - 2 r R \cos \sigma } .
\end{align}
In practice, the elevation angle must satisfy $\psi \geq \psi_{\min}$ for a minimum elevation angle $\psi_{\min}$ that depends on: (i) the visible region of the user with a line-of-sight to the satellite sphere, and (ii) the footprint size of the satellite beam \cite{Cakaj2014}. The maximum user-satellite distance $d_{\max}$ is achieved when a satellite satisfies $\psi = \psi_{\min}$. This corresponds to the {\em maximum central angle} $\sigma_1 = \sigma_1(\psi_{\min})$. Note that when $\psi_{\min} = 0^\circ$ (i.e., the user sees everything above their horizon plane) we have $\sigma_1 = \cos^{-1} (r/R)$ since $d_{\max} = R^2 - r^2$.

\subsection{Visible Cap of a User}
Assume that a user can only see satellites within a cone centred at the user with a central angle equal to $\sigma_1$, where satellites within the cone have central angles $\sigma = \sigma(\theta,\phi) \leq \sigma_1$. While a user is capable of observing a satellite within the cone, we must consider the fact that some orbits cannot be seen for users at high latitudes due to the orbital inclination $b$. This results in a void of satellite near the poles since satellites do not pass above latitude $b$ nor below latitude $-b$, which corresponds to polar angles $\overline{b} \leq \phi \leq \pi - \overline{b}$ with polar inclination angle $\overline{b} = \pi/2 - b$. This corresponds to a cap on the satellite sphere with radius determined by the central angle $\sigma_1$, sliced near the poles due to orbital inclination, giving the cap surface
\begin{align}\label{eq:vis_cap}
    \mathsf{Cap} &= \{(\theta, \phi) : 0 \leq \theta \leq 2\pi, \notag\\
    &\quad\quad\quad\quad\quad \overline{b} \leq \phi \leq \pi - \overline{b}, \notag\\
    &\quad\quad\quad\quad\quad \sigma(\theta,\phi) \leq \sigma_1\} .
\end{align}
Within this cap, a user will only see satellites with central angles in $[\sigma_{\min}, \sigma_{\max}]$. 

Consider a user in the northern hemisphere with $0 \leq \phi_{\rm{u}} \leq \pi/2$, since the southern hemisphere follows by symmetry. The central angle bounds $\sigma_{\min}$ and $\sigma_{\max}$ are characterised by considering two scenarios:
\begin{itemize}
    \item $\phi_{\rm{u}} \geq \overline{b}$: The closest satellite occurs at $\sigma = 0$ (directly overhead). The furthest occurs at $\sigma=\sigma_1$, since if the visible cap is sliced the other side will not be sliced. 
    \item $\phi_{\rm{u}} < \overline{b}$: The closest satellite occurs at central angle
    $\sigma=\bar{b}-\phi_{\rm{u}}$, due to the slicing at polar angle $\phi=\overline{b}$. The furthest occurs at $\sigma=\sigma_1$, since the slicing only occurs at lower polar angles. If the user is too far above the polar angle $\phi=\bar{b}$, they do not see satellites and the central angle is undefined.
\end{itemize}
These scenarios correspond to piecewise bounds
\begin{align}
\begin{cases} 
    \sigma_{\min} = 0, \, \sigma_{\max} = \sigma_1 & \text{if } \phi_{\rm{u}} \geq \overline{b}, \text{else}\\
    \sigma_{\min} = \overline{b} - \phi_{\rm{u}}, \, \sigma_{\max} = \sigma_1 & \text{if }   \overline{b} - \phi_{\rm{u}} \leq \sigma_1 , \\
    \text{undefined} & \text{if }  \overline{b} - \phi_{\rm{u}} > \sigma_1  .
\end{cases}
\end{align}
where the last two cases correspond to $\phi_{\rm{u}} < \overline{b}$. {Note that this is the support of the satellite central angle when the orbital spacing in a Walker-delta constellation is assumed to be arbitrary, otherwise it will be slightly different.}

\section{Signal Propagation Properties: Gain, Delay, and Doppler Shift}


\begin{proposition}\label{prop:pl}
    The channel gain of a satellite channel (inversely proportional to the free space path loss)  at user-satellite central angle $\sigma$ is 
\begin{align}
    \mathsf{G}(\sigma) &=  ||\mathbf{d}||^{-2} = (r^2 +  R^2 - 2 r R \cos\sigma)^{-1}
\end{align}
which is a monotonically decreasing function over domain $\sigma_{\min} \leq \sigma \leq \sigma_{\max}$. The inverse function of the channel gain $g=\mathsf{G}(\sigma)$ is
\begin{align}
    \mathsf{G}^{-1}(g) &= \cos^{-1}\left(\frac{g\left(r^{2}+R^{2}\right)-1}{2 g rR}\right) 
\end{align}
over domain $g_{\min} = \mathsf{G}(\sigma_{\max}) \leq g \leq g_{\max} = \mathsf{G}(\sigma_{\min})$.
\end{proposition}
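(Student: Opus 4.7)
The plan is to treat this as three self-contained assertions: the closed-form expression for the gain, its monotonicity, and the explicit inverse. None of them requires deep machinery; the main work is bookkeeping with the law of cosines and verifying that the principal branch of $\cos^{-1}$ applies on the domain of interest.

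First I would derive the gain formula. Free-space path loss is proportional to $\|\mathbf{d}\|^{2}$, so $\mathsf{G}(\sigma)=\|\mathbf{d}\|^{-2}$ by definition. The squared distance between the user (at radius $r$) and the satellite (at radius $R$), subtending a central angle $\sigma$ at the Earth's center, is given by the law of cosines as $\|\mathbf{d}\|^{2}=r^{2}+R^{2}-2rR\cos\sigma$; this is exactly the second line of the distance expression already stated in Section II. Substituting yields the claimed formula for $\mathsf{G}(\sigma)$.

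Next I would establish monotonicity. On $[\sigma_{\min},\sigma_{\max}]\subseteq[0,\pi]$, the map $\sigma\mapsto\cos\sigma$ is strictly decreasing, so the denominator $r^{2}+R^{2}-2rR\cos\sigma$ is strictly increasing in $\sigma$. Since $r,R>0$ and $r^{2}+R^{2}-2rR\cos\sigma\geq(R-r)^{2}>0$, the denominator is bounded away from zero, so taking reciprocals gives a strictly decreasing function. This both justifies calling $\mathsf{G}$ monotonically decreasing and guarantees invertibility on the specified domain, with image $[g_{\min},g_{\max}]=[\mathsf{G}(\sigma_{\max}),\mathsf{G}(\sigma_{\min})]$.

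Finally I would invert algebraically. Setting $g=(r^{2}+R^{2}-2rR\cos\sigma)^{-1}$ and solving for $\cos\sigma$ gives
\begin{align}
\cos\sigma=\frac{g(r^{2}+R^{2})-1}{2grR},
\end{align}
so $\sigma=\cos^{-1}\bigl((g(r^{2}+R^{2})-1)/(2grR)\bigr)$ as claimed. The only subtle point is that $\cos^{-1}$ is multi-valued, but since $[\sigma_{\min},\sigma_{\max}]\subseteq[0,\pi]$ the principal branch returns the correct value; monotonicity of $\mathsf{G}$ on this interval ensures the argument of $\cos^{-1}$ lies in $[-1,1]$ for all $g\in[g_{\min},g_{\max}]$, so the inverse is well-defined on the stated range. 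The only conceivable obstacle is this branch/domain check, and it is immediate from the monotonicity already established.
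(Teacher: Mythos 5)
Your proof is correct and follows the same route the paper intends — the paper dismisses Propositions 1 and 2 with the single remark that they are ``easily shown using monotonicity over the interval of allowable central angles,'' and your argument is precisely that remark fleshed out (law of cosines for the distance, strict monotonicity of the denominator on $[\sigma_{\min},\sigma_{\max}]\subseteq[0,\pi]$, and the principal-branch check for $\cos^{-1}$). No gaps.
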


\begin{proposition}\label{prop:delay}
The propagation delay of a satellite at user-satellite central angle $\sigma$ is 
\begin{align}
    \mathsf{T}(\sigma) &= \frac{ ||\mathbf{d}||}{c} = c^{-1}\sqrt{r^2 +  R^2 - 2 r R \cos \sigma}
\end{align}
where $c$ is the speed of light, which is a monotonically increasing function over domain $\sigma_{\min} \leq \sigma \leq \sigma_{\max}$. The inverse function of propagation delay $\tau = \mathsf{T}(\sigma)$ is
\begin{align}
    \mathsf{T}^{-1}(\tau) &= \cos^{-1}\left(\frac{r^2 + R^2 - c^2 \tau^2}{2 r R}\right) 
\end{align}
over domain $\tau_{\min} = \mathsf{T}(\sigma_{\min}) \leq \tau \leq \tau_{\max} = \mathsf{T}(\sigma_{\max})$.
\end{proposition}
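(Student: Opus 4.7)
The plan is to follow essentially the same structure as the proof of Proposition~\ref{prop:pl}, since propagation delay is (up to the constant $c^{-1}$) the square root of the inverse of channel gain. First I would establish the expression for $\mathsf{T}(\sigma)$: the user-satellite distance $||\mathbf{d}||$ has already been written in the paper as $\sqrt{r^2 + R^2 - 2rR\cos\sigma}$, obtained by applying the law of cosines to the triangle formed by the Earth's center, the user (at distance $r$), and the satellite (at distance $R$), with included angle $\sigma$. Dividing by $c$ gives the propagation delay immediately.

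Next I would verify monotonicity. Differentiating $\mathsf{T}(\sigma)$ with respect to $\sigma$ yields
\begin{align}
\frac{d\mathsf{T}}{d\sigma} = \frac{rR\sin\sigma}{c\sqrt{r^2 + R^2 - 2rR\cos\sigma}},
\end{align}
which is strictly positive on the domain of interest because $\sigma \in [\sigma_{\min},\sigma_{\max}] \subset (0,\sigma_1] \subset (0,\pi/2)$ implies $\sin\sigma > 0$ (and the denominator is positive since $R > r$). Hence $\mathsf{T}$ is strictly increasing, its image is precisely $[\mathsf{T}(\sigma_{\min}),\mathsf{T}(\sigma_{\max})] = [\tau_{\min},\tau_{\max}]$, and an inverse exists on this interval.

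To derive the inverse, I would square both sides of $c\tau = \sqrt{r^2 + R^2 - 2rR\cos\sigma}$, solve algebraically for $\cos\sigma$, obtaining $\cos\sigma = (r^2 + R^2 - c^2\tau^2)/(2rR)$, and then apply $\cos^{-1}$. Monotonicity ensures this branch is the correct one and that $\cos\sigma \in [-1,1]$ over the stated $\tau$-range.

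I do not expect any genuine obstacle; the only subtlety worth noting is that the argument of $\cos^{-1}$ must remain in $[-1,1]$, but this is guaranteed by the fact that $\tau \in [\tau_{\min},\tau_{\max}]$ is the image of $\sigma \in [\sigma_{\min},\sigma_{\max}]$ under a strictly increasing continuous map. The proof is therefore essentially a mechanical repetition of the derivation for $\mathsf{G}$, with the trivial rescaling by $c^{-1}$ and a sign flip in monotonicity (since now the expression under the square root increases, rather than the reciprocal decreasing).
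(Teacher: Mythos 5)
Your proof is correct and takes essentially the same approach as the paper, which simply asserts that the proposition is ``easily shown using monotonicity over the interval of allowable central angles''; you have merely filled in the routine law-of-cosines, derivative, and inversion steps. The only imprecision is the claim that $[\sigma_{\min},\sigma_{\max}]\subset(0,\sigma_1]$: when $\phi_{\rm u}\geq\overline{b}$ one has $\sigma_{\min}=0$, so $\sin\sigma$ vanishes at the left endpoint, but strict monotonicity on the closed interval still follows from positivity of the derivative on the interior, so nothing in your argument breaks.
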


These propositions are easily shown using monotonicity over the interval of allowable central angles $[\sigma_{\min}, \sigma_{\max}]$.

Numerous formulas of the Doppler shift can be found in the literature. It is often bounded in terms of the central angle (or elevation angle) \cite{3GPP2020}. {Under the circular orbits assumption, with a constant satellite speed,} the exact Doppler shift was derived in \cite{Ali1998} as a function of the elevation angle of the satellite and the maximum possible elevation angle of a satellite on the same orbit. This was demonstrated to accurately characterise LEOs. However, the results to follow in this paper require a Doppler shift function in terms of spherical coordinates, rather than topocentric coordinates.

\begin{theorem}\label{thm:doppler}
The (normalised) Doppler shift of a satellite at $(R,\theta,\phi)$ relative to a user at $(r,\pi/2,\phi_{\rm{u}})$ is 
\begin{align}\label{eq:doppler_shift}
    \mathsf{V}_a(\theta,\phi) &= \frac{vr}{||\mathbf{d}||}[ {{-}}\cos(\beta_a(\phi))\cos(\theta)\sin(\phi_{\rm{u}})\notag\\
    &\quad\quad\quad -\sin(\beta_a(\phi))(\sin(\phi)\cos(\phi_{\rm{u}}) \notag\\
    &\quad\quad\quad\quad\quad\quad\quad\quad - \cos(\phi)\sin(\theta)\sin(\phi_{\rm{u}})) ]
\end{align}
with direction angle
\begin{align}
    \beta_a(\phi) &= a\cos^{-1}\left(\frac{\cos(b)}{\sin(\phi)}\right)
\end{align}
that depends on whether the satellite is on an ascending orbit ($a=+1$) or a descending orbit ($a=-1$).
\end{theorem}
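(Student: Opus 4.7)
The plan is to express the Doppler shift as $\mathsf{V}_a = -\mathbf{v}_a \cdot \hat{\mathbf{d}}$, the negative projection of the satellite's velocity onto the line-of-sight unit vector $\hat{\mathbf{d}} = \mathbf{d}/\|\mathbf{d}\|$ from user to satellite, and then to attack the two pieces of the statement separately: (i) the trigonometric bracket, by expanding $\mathbf{v}_a$ in the local spherical basis and taking the dot product with $\mathbf{u}$; and (ii) the direction angle $\beta_a(\phi)$, by parametrising the inclined circular orbit and computing the angle that the tangent vector makes with $\hat{\boldsymbol{\theta}}$.

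For (i), I would place the user at $\mathbf{u} = r(0, \sin\phi_{\rm{u}}, \cos\phi_{\rm{u}})$ using $\theta_{\rm{u}}=\pi/2$, and write the velocity in the tangent plane at the satellite position as
\begin{equation}
\mathbf{v}_a = v\bigl[\cos\beta_a\, \hat{\boldsymbol{\theta}} - \sin\beta_a\, \hat{\boldsymbol{\phi}}\bigr],
\end{equation}
which satisfies $\mathbf{v}_a \cdot \hat{\mathbf{r}}=0$ because the orbit is circular at radius $R$. Since $\mathbf{s} \parallel \hat{\mathbf{r}}$, we have $\mathbf{v}_a \cdot \mathbf{s} = 0$, so
\begin{equation}
\mathbf{v}_a \cdot \mathbf{d} = \mathbf{v}_a \cdot (\mathbf{s}-\mathbf{u}) = -\mathbf{v}_a \cdot \mathbf{u},
\end{equation}
which cleanly produces the factor $r$ that prefixes the bracket in \eqref{eq:doppler_shift}. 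Substituting the rows for $\hat{\boldsymbol{\theta}}$ and $\hat{\boldsymbol{\phi}}$ from the rotation matrix in the preliminaries and evaluating $\hat{\boldsymbol{\theta}}\cdot\mathbf{u}$ and $\hat{\boldsymbol{\phi}}\cdot\mathbf{u}$ reproduces the trigonometric expression after division by $\|\mathbf{d}\|$.

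For (ii), I would take the ascending node at longitude $0$ without loss of generality and parametrise by argument of latitude $\omega$, giving
\begin{align}
\mathbf{s}(\omega) &= R\bigl(\cos\omega,\; \sin\omega\cos b,\; \sin\omega\sin b\bigr), \\
\mathbf{v}(\omega) &= v\bigl(-\sin\omega,\; \cos\omega\cos b,\; \cos\omega\sin b\bigr).
\end{align}
Reading off the identities $\cos\phi = \sin\omega\sin b$, $\sin\phi\cos\theta = \cos\omega$, $\sin\phi\sin\theta = \sin\omega\cos b$, one then evaluates $\mathbf{v}\cdot\hat{\boldsymbol{\theta}}$: the cross terms collapse through $\sin^{2}\omega+\cos^{2}\omega = 1$ and yield $\mathbf{v}\cdot\hat{\boldsymbol{\theta}} = v\cos b/\sin\phi$, which by $\mathbf{v}\cdot\hat{\boldsymbol{\theta}}=v\cos\beta$ gives $\cos\beta_a = \cos b/\sin\phi$ for $a=+1$. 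A descending orbit reverses the sign of the polar component of $\mathbf{v}$, flipping $\sin\beta$ while preserving $\cos\beta$, which is exactly captured by the factor $a\in\{\pm 1\}$ inside $\cos^{-1}$.

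The main obstacle will be sign bookkeeping, since $\hat{\boldsymbol{\phi}}$ points toward increasing polar angle (southward) in the paper's convention, while an ascending satellite moves northward; the $\mp$ signs must be tracked consistently through both the velocity expansion and the spherical trigonometry so that the bracket in \eqref{eq:doppler_shift} matches exactly. I would sanity-check the final formula against the two limiting orbital positions $\phi = \pi/2$ (equator, expecting $\beta_a = \pm b$) and $\phi = \bar{b}$ (apex of the orbit, expecting $\beta_a = 0$ with purely easterly velocity), both of which follow immediately from $\cos\beta_a = \cos b/\sin\phi$, and also verify that substituting a user collocated beneath the sub-satellite point correctly yields $\mathsf{V}_a = 0$ (zero radial velocity at the closest approach geometry).
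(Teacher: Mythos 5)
Your overall strategy is sound and, for the projection step, essentially matches the paper's Appendix A: both expand $\mathbf{v}_a = v(\cos\beta_a\,\hat{\boldsymbol{\theta}} - \sin\beta_a\,\hat{\boldsymbol{\phi}})$ and dot it against the user--satellite geometry. Your observation that $\mathbf{v}_a\cdot\mathbf{s}=0$ (tangency to the orbital sphere) reduces the computation to $\mathbf{v}_a\cdot\mathbf{d} = -\mathbf{v}_a\cdot\mathbf{u}$ is a genuine simplification over the paper, which grinds through the full Cartesian dot product; it also explains transparently why only $r$ (and not $R$) survives in the prefactor. For the direction angle, you take a genuinely different route: the paper invokes the spherical sine rule on the triangle formed by the satellite, the north pole, and the orbit's intersection with the y-z plane, whereas you parametrise the orbit by argument of latitude and read off $\mathbf{v}\cdot\hat{\boldsymbol{\theta}} = v\cos b/\sin\phi$ directly. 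Your version is more elementary and self-contained (no spherical trigonometry needed), and it also delivers the sign of $\sin\beta_a$ for ascending versus descending arcs as a byproduct of $\cos\omega$, which the paper handles only by assertion.

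There is, however, one concrete defect: your starting definition $\mathsf{V}_a = -\mathbf{v}_a\cdot\hat{\mathbf{d}}$ produces the \emph{negative} of the stated bracket. Carrying your own steps through, $\mathsf{V}_a = -\mathbf{v}_a\cdot\hat{\mathbf{d}} = +\mathbf{v}_a\cdot\mathbf{u}/\|\mathbf{d}\|$, and with $\mathbf{u}=r(0,\sin\phi_{\rm u},\cos\phi_{\rm u})$ one gets
\begin{equation}
\mathbf{v}_a\cdot\mathbf{u} = vr\bigl[+\cos\beta_a\cos\theta\sin\phi_{\rm u} + \sin\beta_a(\sin\phi\cos\phi_{\rm u} - \cos\phi\sin\theta\sin\phi_{\rm u})\bigr],
\end{equation}
which is $(-1)$ times the bracket in \eqref{eq:doppler_shift}. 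The theorem's formula is exactly $+\mathbf{v}_a\cdot\hat{\mathbf{d}}$, i.e., the range rate (positive when the satellite recedes), and that is the convention the paper's proof adopts. Your convention is the physically standard one for a frequency shift, but it proves the negative of the statement as written. Note also that your proposed sanity checks (the limiting values of $\beta_a$ and the overhead geometry where $\mathsf{V}_a=0$) are all insensitive to a global sign flip, so they would not catch this; you would need to test an off-nadir approaching satellite against the explicit formula. Fix the definition to $\mathsf{V}_a = \mathbf{v}_a\cdot\hat{\mathbf{d}}$ (or equivalently $-\mathbf{v}_a\cdot\mathbf{u}/\|\mathbf{d}\|$) and the rest of your argument goes through.
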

The proof is given in Appendix A.

For a carrier frequency $f_c$, the Doppler shift is $(f_c/c) \mathsf{V}_a(\theta,\phi)$ where $c$ is the speed of light. The maximum Doppler shift is $\nu_{\max} = \max\{\nu_{\max,a} : a \in \{\pm 1\}\}$ for $\nu_{\max,a} = \max\{\mathsf{V}_a(\theta,\phi) : (\theta,\phi) \in \mathsf{Cap} \}$. Observe that $\beta_a(\pi/2) = b$ and that $\beta_a(\phi)$ is flat around $\phi=\pi/2$, hence a tight approximation for a user at the equator is $\beta_a(\phi) \approx b$.

\section{Stochastic Signal Propagation Model}

In this section, we model the positions of $N$ satellites and their directions on the satellite sphere in a mega-constellation using a NBPP. This assumes that the satellites are independent and randomly distributed over the orbital shell. The distributions we consider are parameterised in terms of a random rotational angle $\Theta$, random polar angle $\Phi$, and random direction $\beta_A(\Phi)$.

\begin{definition}[Mega-constellation NBPP]
    The mega-constellation NBPP is defined by {$N$} i.i.d. satellites on a sphere with spherical coordinates $(R,\Theta,\Phi)$, where:
    \begin{itemize}
        \item Rotational angle $\Theta$ is uniformly distributed on $[0,2\pi]$.
        \item Polar angle $\Phi$ obeys the PDF \cite[Lemma~2]{Okati2020b} 
        \begin{align}
    f_{\Phi}(\phi) &= \frac{\sin(\phi)}{\pi \sqrt{\sin^2(\phi)-\cos^2(b)}} 
\end{align}
for $\phi \in [\overline{b},\pi-\overline{b}]$ and zero otherwise.
    \end{itemize}
In addition, it is marked by the direction angle $\beta_A(\Phi)$ where  $A$ is uniformly distributed on $\{\pm 1\}$.
\end{definition}

The rotational angle $\Theta$ is chosen to be uniformly random since the orbits are uniformly spaced by discrete rotational angles. The polar angle $\Phi$ is chosen to obey this particular PDF since it is the distribution that arises when choosing the argument of latitude $\omega$ uniformly random on $[0,2\pi]$ \cite[Lemma~2]{Okati2020a}. We note that the following results can be directly applied to an arbitrarily distributed $\Phi$, however, we require that $\Theta$ is uniformly distributed over $[0,2\pi]$ for rotational symmetry. In addition, the direction is uniformly distributed on $\{\pm 1\}$ since each satellite spends half of its time in each hemisphere.

\subsection{Average Fraction of Satellites}
In this section, we derive the { average fraction of satellites in the cap $\mathsf{Cap}(\sigma)$ with max. central angle $\sigma$, which is defined as $p_{\rm{cap}}(\sigma)=\mathbb{E}[\mathbbm{1}\{(\Theta,\Phi) \in \mathsf{Cap}(\sigma)\}]$}. However, we allow $\sigma_1$ to vary so that we can specify the probability of a satellite in an arbitrary visible cap. This will be useful in the next section for parameterising the cap probability in terms of other useful quantities.

\begin{theorem}\label{thm:p_cap}
    The average fraction of satellites in the visible cap centered at polar angle $\phi_{\rm{u}}$, $0 \leq \phi_{\rm{u}} \leq \pi/2$, with central angle $\sigma$, $0 \leq \sigma \leq \sigma_1$, is
    \begin{align}
        p_{\rm{cap}}(\sigma) &= F_{\Phi}(\max\{0, \sigma_1 - \phi_{\rm{u}}\}) \notag\\
        &\quad+ \frac{1}{2\pi}\int_{\max\{0, \sigma_1 - \phi_{\rm{u}}\}}^{\sigma_1 + \phi_{\rm{u}}} f_{\Phi}(\phi) L_1(\phi; \sigma) d\phi 
    \end{align}
    where 
    \begin{align}
         &L_1(\phi; \sigma)\notag\\
         &= \pi + 2\sin^{-1}(\csc(\phi)(\cot(\phi_{\rm{u}}) \cos(\phi) - \cos(\sigma)\csc(\phi_{\rm{u}}))) .
    \end{align}
\end{theorem}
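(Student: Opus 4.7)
The plan is to exploit the independence of $\Theta$ and $\Phi$ together with the rotational symmetry of the problem to reduce the computation to a one-dimensional integral over $\phi$. Concretely, I would write
\begin{align}
p_{\rm cap}(\sigma) = \int_{0}^{\pi} f_\Phi(\phi)\, \mathbb{P}_\Theta\!\left(\sigma(\Theta,\phi) \leq \sigma\right) d\phi,
\end{align}
so that all that remains is to compute, for each fixed $\phi$, the fraction of the parallel at polar angle $\phi$ that lies inside the visible cap centred on the user.

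To evaluate this inner probability I substitute the central-angle identity $\cos\sigma(\theta,\phi) = \cos\phi_{\rm u}\cos\phi + \sin\phi_{\rm u}\sin\phi\sin\theta$ into $\sigma(\theta,\phi) \le \sigma$ and (for $\phi_{\rm u}\in(0,\pi/2]$) rearrange it to $\sin\theta \geq k(\phi)$ with $k(\phi) = (\cos\sigma - \cos\phi_{\rm u}\cos\phi)/(\sin\phi_{\rm u}\sin\phi)$. Since $\Theta$ is uniform on $[0,2\pi]$, the Lebesgue measure of $\{\theta\in[0,2\pi]:\sin\theta\ge k\}$ equals $2\pi$ when $k\le -1$, equals $0$ when $k\ge 1$, and equals $\pi - 2\sin^{-1} k$ in the partial-ring regime $|k|<1$. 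A direct rearrangement gives $-k(\phi) = \csc\phi\big(\cot\phi_{\rm u}\cos\phi - \csc\phi_{\rm u}\cos\sigma\big)$, so via $\sin^{-1}(-x) = -\sin^{-1} x$ the partial-ring fraction $(\pi-2\sin^{-1} k(\phi))/(2\pi)$ matches exactly $L_1(\phi;\sigma)/(2\pi)$ as stated.

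Next I translate the regimes of $k$ into $\phi$-intervals using sum/difference identities: $k(\phi)\le -1 \iff \cos\sigma \le \cos(\phi+\phi_{\rm u}) \iff \phi \le \sigma-\phi_{\rm u}$, while $k(\phi)\ge 1 \iff \sigma \le |\phi-\phi_{\rm u}|$. The full-ring contribution is thus $\int_{0}^{\max\{0,\sigma-\phi_{\rm u}\}} f_\Phi(\phi)\,d\phi = F_\Phi(\max\{0,\sigma-\phi_{\rm u}\})$, and the partial-ring contribution is the integral of $f_\Phi(\phi)\, L_1(\phi;\sigma)/(2\pi)$ over $(\max\{0,\sigma-\phi_{\rm u}\},\,\sigma+\phi_{\rm u})$; summing the two yields the announced formula.

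The main subtlety is justifying the lower limit $\max\{0,\sigma-\phi_{\rm u}\}$ of the partial-ring integral. When $\sigma<\phi_{\rm u}$ the true partial-ring interval is $(\phi_{\rm u}-\sigma,\phi_{\rm u}+\sigma)$, not $(0,\phi_{\rm u}+\sigma)$, so the spurious piece $[0,\phi_{\rm u}-\sigma]$ must be shown to contribute nothing. There $k(\phi)\ge 1$, so under the natural extension of $\sin^{-1}$ to the value $-\pi/2$ for arguments $\le -1$ the integrand $L_1(\phi;\sigma)$ vanishes identically; moreover its intersection with $[0,\overline{b}]$ contributes nothing because $f_\Phi\equiv 0$ there. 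These same two mechanisms also handle the polar-region support cut-off imposed by the orbital-inclination parameter $\overline{b}$.
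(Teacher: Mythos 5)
Your proposal is correct and follows essentially the same route as the paper's Appendix~B: condition on the polar angle $\phi$, compute the arc length of the parallel at $\phi$ lying inside the cap (full ring, partial ring of length $L_1$, or empty) from the roots of $\sigma(\theta,\phi)=\sigma$ in $\theta$, and integrate against $f_\Phi(\phi)/(2\pi)$. You are in fact more explicit than the paper on two points worth keeping --- the derivation of $L_1$ from the measure of $\{\sin\theta\ge k(\phi)\}$ and the convention $\sin^{-1}(x)=-\pi/2$ for $x\le-1$ that makes the spurious piece $[0,\phi_{\rm u}-\sigma]$ contribute zero --- and your integration limits correctly appear in terms of $\sigma$ where the printed statement writes $\sigma_1$.
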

The proof is given in Appendix B.

The success probability of a satellite in the NBPP is $p_{\rm{sat}} = p_{\rm{cap}}(\sigma_1)$. Then, the probability of $n$ visible satellites is the binomial distribution
\begin{align}
     p_{\rm vis}(n) &= {N \choose n} p^n_{\rm{sat}} (1-p_{\rm{sat}})^{N-n}
\end{align}
for $0 \leq n \leq N$. In particular, the probability of at least one satellite is defined as the {\em availability probability} $p_{\rm{a}} = 1 - p_{\rm{vis}}(0) = 1 - (1-p_{\rm{sat}})^{N}$. For a well-designed mega-constellation, we expect that $p_{\rm{a}}\approx 1$ so that there is always a satellite available for transmission to an 
arbitrary user on Earth. The average number of visible satellites for a user at polar angle $\phi_{\rm{u}}$ is specified by $Np_{\rm{sat}}$, allowing us to study the satellite coverage for a range of user latitudes. This is important for designing satellite networks with sufficient resources to equally serve users at any location  { and has been studied extensively in the literature (e.g., \cite[Fig. 2]{Hall2022}, \cite[Fig. 5(a)]{Duggan2022} and \cite[Fig. 4]{DelPortillo2019}), which we reiterate here with our tractable model.} { Note that multiple orbital shells would combine $p_{\rm sat}$ for different values of $b$ and $h$, as in \cite[Fig. 4]{DelPortillo2019} for the full Starlink constellation}. 

Consider the average fraction of satellites in Fig. \ref{fig:prob_sat} for the system parameters in Table \ref{tab:sys_params}. A significant fraction of satellites are above users at a wide range of latitudes. Users at moderately-high latitudes just below $b = 53^\circ$ benefit from the higher density of satellites within sight. However, this density goes to zero above latitude $b$ and thus users above this latitude observe fewer and fewer satellites overhead, until eventually there are no satellites in sight. In addition, the average fraction of satellites as a function of user latitude is scaled up and slightly widened when reducing the minimum elevation angle $\psi_{\min}$. For $\psi_{\min} = 30^\circ$ and {$N = 3168$} satellites, a user at zero latitude (the equator) observes {$\approx 9.6$} satellites on average, which more than doubles for a user at latitude $b=53^\circ$ to {$\approx 25.6$} satellites on average. Up to user latitude $\approx 60^\circ$, there is at least one visible satellite $99\%$ of the time, otherwise there are none. Notably, above this latitude the density of the human population is close to zero.

The average fraction of satellites of the NBPP model tightly approximates the { circular orbits and SGP4} simulations. Observe that the model accuracy degrades at low latitudes, which will be explained in the next sub-section using more granular properties of the model.

\begin{figure}
    \centering
\includegraphics[width=0.95\linewidth]{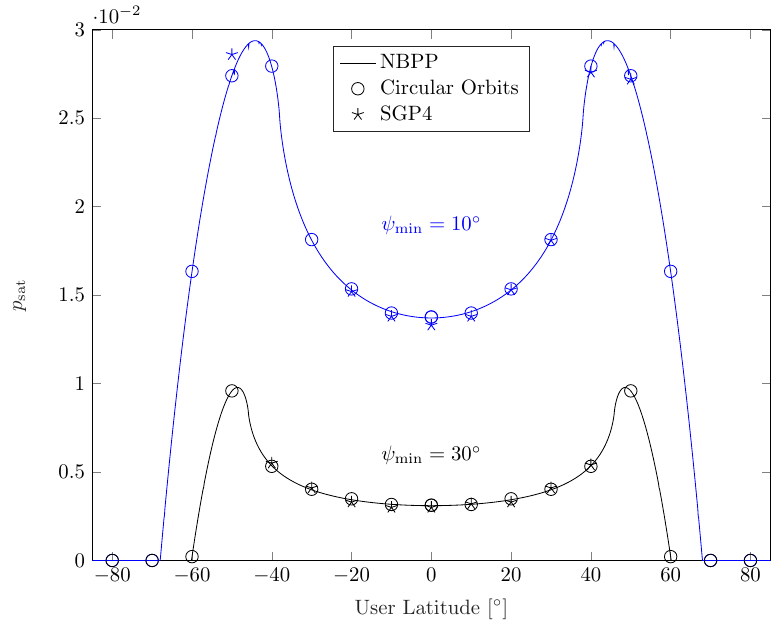}
    \caption{ Average fraction of satellites for the system parameters in Table \ref{tab:sys_params}. Two typical values of minimum elevation $\psi_{\min}$ are shown.}
    \label{fig:prob_sat}
\end{figure}

\subsection{Distribution of Gain, Delay, and Doppler Shift}

Let us now characterise the stochastic signal properties of satellites in the NBPP model: the gain is a random variable $H$, the propagation delay is a random variable $T$, and the Doppler shift is a random variable $V$. For convenience, we condition these random variables on the satellites being in the user's visible cone, which occurs with probability $p_{\rm{sat}}$, giving the visible gain $\overline{G}$, visible delay $\overline{T}$, and visible Doppler $\overline{V}$.

Firstly, we are going to derive the CDFs of these random variables by reparameterising the $p_{\rm{cap}}(\sigma)$ function, which is a CDF parameterised in terms of a central angle $\sigma$ of a cap. Since the signal properties are monotone functions, we can do a change of variables to find their CDFs. 

\begin{theorem}\label{thm:cdfs}
    Given a visible satellite, the CDF of its gain $\overline{G}$ is
    \begin{align}
        F_{\overline{G}}(g)  &= 1 - \frac{p_{\rm{cap}}(\mathsf{G}^{-1}(g))}{p_{\rm{sat}}} ,
    \end{align}
    and the CDF of its propagation delay $\overline{T}$ is
    \begin{align}
        F_{\overline{T}}(\tau)  &= \frac{p_{\rm{cap}}(\mathsf{T}^{-1}(\tau))}{p_{\rm{sat}}} .
    \end{align}
\end{theorem}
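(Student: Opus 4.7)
The plan is to recognise that $p_{\rm cap}(\sigma)$ is essentially the CDF of the user-satellite central angle $\Sigma$ of a random satellite drawn from the NBPP, and then to perform a monotone change of variables from $\Sigma$ to the signal properties $\mathsf{G}(\Sigma)$ and $\mathsf{T}(\Sigma)$ after conditioning on the satellite being visible.

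First, I would let $\Sigma \delequal \sigma(\Theta,\Phi)$ denote the random central angle of a random satellite under the NBPP. Since the support of $\Phi$ is exactly $[\overline{b},\pi-\overline{b}]$, the inclination-slicing in $\mathsf{Cap}(\sigma)$ is automatically respected, so by Theorem~\ref{thm:p_cap},
\begin{align}
p_{\rm cap}(\sigma) = \mathbb{P}\bigl((\Theta,\Phi) \in \mathsf{Cap}(\sigma)\bigr) = \mathbb{P}(\Sigma \le \sigma),
\end{align}
i.e.\ $p_{\rm cap}$ is the unconditional CDF of $\Sigma$. In particular, $\mathbb{P}(\Sigma \le \sigma_1) = p_{\rm cap}(\sigma_1) = p_{\rm sat}$, so conditioning on the satellite being visible simply means conditioning on $\{\Sigma \le \sigma_1\}$, which rescales by $p_{\rm sat}$.

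Next I would invoke the monotonicity properties from Propositions~\ref{prop:pl} and \ref{prop:delay}. Since $\mathsf{G}$ is strictly decreasing on $[\sigma_{\min},\sigma_{\max}]$, the event $\{\mathsf{G}(\Sigma) \le g\}$ is equivalent to $\{\Sigma \ge \mathsf{G}^{-1}(g)\}$ for $g \in [g_{\min}, g_{\max}]$. Hence
\begin{align}
F_{\overline{G}}(g) &= \mathbb{P}\bigl(\mathsf{G}(\Sigma) \le g \,\big|\, \Sigma \le \sigma_1\bigr) \notag\\
&= \frac{\mathbb{P}\bigl(\mathsf{G}^{-1}(g) \le \Sigma \le \sigma_1\bigr)}{p_{\rm sat}} \notag\\
&= \frac{p_{\rm cap}(\sigma_1) - p_{\rm cap}(\mathsf{G}^{-1}(g))}{p_{\rm sat}} = 1 - \frac{p_{\rm cap}(\mathsf{G}^{-1}(g))}{p_{\rm sat}},
\end{align}
which gives the first identity. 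For propagation delay, $\mathsf{T}$ is strictly increasing on $[\sigma_{\min},\sigma_{\max}]$, so $\{\mathsf{T}(\Sigma) \le \tau\} = \{\Sigma \le \mathsf{T}^{-1}(\tau)\}$ for $\tau \in [\tau_{\min},\tau_{\max}]$, and the same conditioning argument yields
\begin{align}
F_{\overline{T}}(\tau) = \frac{\mathbb{P}(\Sigma \le \mathsf{T}^{-1}(\tau))}{p_{\rm sat}} = \frac{p_{\rm cap}(\mathsf{T}^{-1}(\tau))}{p_{\rm sat}},
\end{align}
which is the second identity.

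There is no real obstacle — the proof is essentially a two-line change of variables once $p_{\rm cap}$ is identified as a CDF. The only subtlety worth flagging is the direction of the inequality: because $\mathsf{G}$ is decreasing while $\mathsf{T}$ is increasing, the gain CDF picks up a complement ($1 - \,\cdot\,$) while the delay CDF does not. A secondary bookkeeping point is verifying that the inverses are evaluated on their natural domains $[g_{\min},g_{\max}]$ and $[\tau_{\min},\tau_{\max}]$, which is guaranteed by Propositions~\ref{prop:pl} and \ref{prop:delay}; outside these intervals the CDFs simply saturate at $0$ or $1$.
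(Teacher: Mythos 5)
Your proof is correct and follows essentially the same route as the paper's: identify $p_{\rm cap}$ as the distribution function of the central angle $\Sigma=\sigma(\Theta,\Phi)$, condition on visibility via $p_{\rm sat}=p_{\rm cap}(\sigma_1)$, and change variables using the monotonicity of $\mathsf{G}$ and $\mathsf{T}$ from Propositions~\ref{prop:pl} and \ref{prop:delay}. If anything, your write-up is the more careful one: you state the event equivalences with the correct inequality directions ($\mathsf{G}$ decreasing flips the event, $\mathsf{T}$ increasing does not), whereas the paper's terse proof writes them the other way around while also reading $p_{\rm cap}$ as a survival function --- two slips that cancel to give the same final formulas.
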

\begin{proof}
 Observe that $p_{\rm cap} = \mathbb{P}(\sigma(\Theta,\Phi) \geq \sigma_1)$, which is an inverse CDF. Now, observe that $\{\mathsf{T}(\sigma(\Theta,\Phi)) \leq \tau\} = \{\sigma(\Theta,\Phi) \geq \mathsf{T}^{-1}(\tau)\}$ and $\{\mathsf{G}(\sigma(\Theta,\Phi)) \leq g\} = \{\sigma(\Theta,\Phi) \leq \mathsf{G}^{-1}(g)\}$. This holds since delay and gain are one-to-one functions that uniquely specify $\sigma$. 
\end{proof}

We can find their PDFs by taking the derivative of the CDFs, which will be in terms of the derivative of $p_{\rm{cap}}(\sigma)$. These are simpler to derive using derivatives with respect to $\cos(\sigma)$, so that the $\cos^{-1}(\cdot)$ in the inverse functions is eliminated.

\begin{lemma}
    The derivative of $p_{\rm{cap}}(\sigma)$ with respect to $\cos(\sigma)$ is
    \begin{align}
    p'_{\rm{cap}}(\sigma) &=
    \frac{1}{2\pi}\int_{\max\{0, \sigma_1 - \phi_{\rm{u}}\}}^{\sigma_1 + \phi_{\rm{u}}} f_{\Phi}(\phi) L'(\phi; \sigma) d\phi
\end{align}
where
\begin{align}
    &L'(\phi; \sigma)\notag\\
    &= \frac{\partial}{\partial \cos(\sigma)} L_1(\phi; \sigma)\\
    &= - \frac{2 \csc(\phi_{\rm{u}}) \csc(\phi)}{\sqrt{1 - (\cot(\phi_{\rm{u}}) \cot(\phi)  - \cos(\sigma) \csc(\phi_{\rm{u}}) \csc(\phi))^2}} .
\end{align}
\end{lemma}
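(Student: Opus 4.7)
The plan is to differentiate the formula for $p_{\rm cap}(\sigma)$ in Theorem~\ref{thm:p_cap} via Leibniz's rule, using the conversion $d/d\cos\sigma = (-1/\sin\sigma)\, d/d\sigma$. Three sources of $\sigma$-dependence must be tracked: the leading term $F_\Phi(\max\{0,\sigma-\phi_{\rm u}\})$, the two integration limits, and the integrand $L_1(\phi;\sigma)$. I expect the first two contributions to cancel pairwise, leaving only the differentiation-under-the-integral term that matches the claim.

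For the boundary cancellation, I would evaluate $L_1$ at $\phi=\sigma\pm\phi_{\rm u}$ using the identity $\cos(\sigma\pm\phi_{\rm u})=\cos\sigma\cos\phi_{\rm u}\mp\sin\sigma\sin\phi_{\rm u}$. After clearing the $\csc\phi_{\rm u}$ factor, the argument of $\sin^{-1}$ inside $L_1$ should collapse to exactly $\mp 1$, giving $L_1(\sigma+\phi_{\rm u};\sigma)=0$ and $L_1(\sigma-\phi_{\rm u};\sigma)=2\pi$. The upper Leibniz boundary term then vanishes outright, while the lower one contributes $f_\Phi(\sigma-\phi_{\rm u})/\sin\sigma$, which is exactly the negative of the $d/d\cos\sigma$-derivative of $F_\Phi(\sigma-\phi_{\rm u})$. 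In the regime $\sigma\leq\phi_{\rm u}$, the lower limit is pinned at $0$ and $F_\Phi(0)=0$ (since $\Phi$ is supported on $[\bar b,\pi-\bar b]$), so no cancellation is needed.

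The remaining contribution, from differentiating $L_1$ itself, follows from a direct chain rule. Writing $L_1=\pi+2\sin^{-1}(w)$ with $w=\cot\phi_{\rm u}\cot\phi-\cos\sigma\,\csc\phi_{\rm u}\csc\phi$, the $\sigma$-dependence is linear in $\cos\sigma$. Thus $\partial w/\partial\cos\sigma=-\csc\phi_{\rm u}\csc\phi$ and $(d/dw)\sin^{-1}(w)=1/\sqrt{1-w^{2}}$ reproduce the stated $L'(\phi;\sigma)$.

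The main obstacle is the boundary-value computation: the cancellation is clean only because $L_1$ hits the exact values $0$ and $2\pi$ at those endpoints. Geometrically, these are the polar angles where the constant-$\phi$ slice is tangent to the cap boundary (the admissible $\theta$-arc collapses to a point or fills the full circle), and this is what forces the extremal values. Once this is established the rest is mechanical calculus, and converting between $d/d\sigma$ and $d/d\cos\sigma$ merely multiplies everything by the common factor $-1/\sin\sigma$, preserving the cancellation structure.
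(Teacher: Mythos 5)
Your proof is correct, but it is considerably more thorough than the paper's own argument, which consists of the single sentence that the derivative exists because the integrand is continuous --- i.e., the paper treats this purely as differentiation under the integral sign with fixed limits and never addresses boundary terms. The discrepancy traces to a notational ambiguity in Theorem~\ref{thm:p_cap}: the integration limits and the leading $F_{\Phi}$ term are written with $\sigma_1$, which would make them constants in $\sigma$, but since $L_1(\phi;\sigma)$ is real-valued only for $|\phi-\phi_{\rm u}|\leq\sigma$ (the argument of $\sin^{-1}$ leaves $[-1,1]$ otherwise), the formula can only define a function of $\sigma$ on $[0,\sigma_1]$ if the limits are read as $\max\{0,\sigma-\phi_{\rm u}\}$ and $\sigma+\phi_{\rm u}$ --- exactly the reading you adopted. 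Under that reading your Leibniz computation is the argument that is actually needed: the endpoint evaluations $L_1(\sigma+\phi_{\rm u};\sigma)=0$ and $L_1(\sigma-\phi_{\rm u};\sigma)=2\pi$ (both of which you verify correctly via the angle-addition identity), the cancellation of the lower boundary term against $\frac{d}{d\cos\sigma}F_{\Phi}(\sigma-\phi_{\rm u})$, and the observation that the $\sigma\leq\phi_{\rm u}$ regime needs no cancellation because $F_{\Phi}$ vanishes below $\overline{b}$. The chain-rule evaluation of $L'(\phi;\sigma)$ matches the stated formula. In short, your route buys a complete justification of why no boundary contributions survive, which the paper's one-line proof silently presupposes; the paper's version buys brevity only by leaning on the (typographically obscured) convention that the limits do not move.
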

\begin{proof}
    The derivative exists since the integrand is continuous over the domain considered. 
\end{proof}

\begin{theorem}\label{thm:pdfs}
     Given a visible satellite, the PDF of its gain $\overline{G}$ is
    \begin{align}
        f_{\overline{G}}(g)  &= -\frac{1}{2 g^2 r R} \left[ \frac{p'_{\rm cap}\left(\mathsf{G}^{-1}(g)\right)}{p_{\rm sat}} \right] ,
    \end{align}
    and the PDF of its propagation delay $\overline{T}$ is
    \begin{align}
    f_{\overline{T}}(\tau) &= -\frac{c^2 \tau}{rR} \left[ \frac{p'_{\rm cap}\left(\mathsf{T}^{-1}(\tau)\right)}{p_{\rm sat}} \right] .
\end{align}
\end{theorem}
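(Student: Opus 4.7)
The plan is to differentiate the CDFs from Theorem~\ref{thm:cdfs} directly, but to route the chain rule through $\cos(\sigma)$ rather than $\sigma$ itself. This is natural because $p'_{\rm cap}$ has been defined in the preceding Lemma as the derivative of $p_{\rm cap}$ with respect to $\cos(\sigma)$, and the inverse functions $\mathsf{G}^{-1}(g)$ and $\mathsf{T}^{-1}(\tau)$ from Propositions~\ref{prop:pl} and \ref{prop:delay} have the form $\cos^{-1}(\cdot)$. Taking the derivative through $\cos(\sigma)$ therefore cancels the $\cos^{-1}$ cleanly, avoiding any need to differentiate $\cos^{-1}$ explicitly.

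First, I would write
\begin{align}
f_{\overline{G}}(g) &= \frac{d}{dg} F_{\overline{G}}(g) = -\frac{1}{p_{\rm sat}}\, p'_{\rm cap}\!\left(\mathsf{G}^{-1}(g)\right) \cdot \frac{d \cos(\mathsf{G}^{-1}(g))}{dg},
\end{align}
where the minus sign comes from the ``$1-$'' in $F_{\overline{G}}$. Then I would evaluate the inner derivative using Proposition~\ref{prop:pl}, which gives $\cos(\mathsf{G}^{-1}(g)) = \frac{g(r^2+R^2)-1}{2 g r R} = \frac{r^2+R^2}{2rR} - \frac{1}{2 g r R}$; differentiating with respect to $g$ yields $\frac{1}{2 g^2 r R}$, and substituting back recovers the claimed expression for $f_{\overline{G}}(g)$.

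For the delay PDF I would repeat the same calculation: starting from $F_{\overline{T}}(\tau) = p_{\rm cap}(\mathsf{T}^{-1}(\tau))/p_{\rm sat}$ (no sign flip this time), writing
\begin{align}
f_{\overline{T}}(\tau) &= \frac{1}{p_{\rm sat}} \, p'_{\rm cap}\!\left(\mathsf{T}^{-1}(\tau)\right) \cdot \frac{d\cos(\mathsf{T}^{-1}(\tau))}{d\tau},
\end{align}
and then using Proposition~\ref{prop:delay} which gives $\cos(\mathsf{T}^{-1}(\tau)) = \frac{r^2+R^2-c^2\tau^2}{2rR}$, whose derivative in $\tau$ is $-\frac{c^2 \tau}{rR}$. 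This reproduces the stated formula with the correct overall minus sign.

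There is no genuine obstacle: the only subtlety is the bookkeeping of the unusual convention that $p'_{\rm cap}$ differentiates with respect to $\cos(\sigma)$ and not $\sigma$, which must be applied consistently on both sides of the chain rule. Well-definedness of the derivative at the endpoints $g_{\min}, g_{\max}, \tau_{\min}, \tau_{\max}$ is already covered by the continuity remark in the preceding Lemma and the monotonicity of $\mathsf{G}$ and $\mathsf{T}$ over $[\sigma_{\min},\sigma_{\max}]$ established in Propositions~\ref{prop:pl}--\ref{prop:delay}, so no separate verification is needed.
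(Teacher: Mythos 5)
Your proposal is correct and follows the same route as the paper, which proves the theorem by the one-line chain rule identity $\frac{d}{dy}\{p_{\rm cap}(f(y))\} = p'_{\rm cap}(f(y))\,\frac{d}{dy}\{\cos(f(y))\}$; you simply carry out the inner derivatives of $\cos(\mathsf{G}^{-1}(g))$ and $\cos(\mathsf{T}^{-1}(\tau))$ explicitly, and both computations and signs check out.
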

\begin{proof}
    Apply the chain rule $\frac{d}{dy} \{p_{\rm{cap}}\left(f(y)\right)\} =  p'_{\rm{cap}}\left(f(y)\right) \frac{d}{dy}\left\{\cos(f(y))\right\}$.
\end{proof}

\begin{theorem}\label{thm:Doppler_CDF}
    Given a visible satellite, the CDF of its Doppler shift $\overline{V}$ conditioned on $A$ is
\begin{align}
    &F_{\overline{V}|A}(\nu|a) \notag\\
    &= \frac{1}{2\pi p_{\rm{sat}}}\int_{0}^{\pi} \int_{\theta_{\rm{u}} - L_1(\phi ; \sigma_1)/2}^{\theta_{\rm{u}} + L_1(\phi ; \sigma_1)/2} f_{\Phi}(\phi) \mathbbm{1}{\{\mathsf{V}_a(\theta,\phi) \leq \nu\}}  d\theta d\phi 
\end{align}
for all $\nu \in [-\nu_{\max},\nu_{\max}]$, $\tau \in [\tau_{\min},\tau_{\max}]$, and $a \in \{\pm 1\}$. Hence, we also have $F_{\overline{V}}(\nu) = 0.5 F_{\overline{V}|A}(\nu|{+1}) + 0.5 F_{\overline{V}|A}(\nu|{-1})$.
\end{theorem}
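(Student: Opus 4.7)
The plan is to prove Theorem \ref{thm:Doppler_CDF} by direct evaluation of the conditional probability, rather than by a change of variables through $p_{\rm cap}(\sigma)$ as was done for gain and delay. The reason is that $\mathsf{V}_a(\theta,\phi)$ is not a one-to-one function of the central angle $\sigma$ alone (it explicitly depends on both $\theta$ and $\phi$ through the geometry in Theorem \ref{thm:doppler}), so the monotonicity trick used in Theorem \ref{thm:cdfs} does not apply. Instead, I would work directly with the joint distribution of $(\Theta,\Phi)$ and carve out the required event as a double integral.

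First, I would unfold the conditional CDF as
\begin{align}
F_{\overline{V}|A}(\nu|a) = \frac{\mathbb{P}\left(\mathsf{V}_a(\Theta,\Phi) \leq \nu,\, (\Theta,\Phi)\in\mathsf{Cap}\right)}{p_{\rm sat}},
\end{align}
using the fact that conditioning on visibility introduces the factor $1/p_{\rm sat}$, and that $A$ is independent of $(\Theta,\Phi)$ so conditioning on $A=a$ only serves to fix the mark in $\mathsf{V}_a$. Next, I would write the numerator as an expectation over the joint density $f_{(\Theta,\Phi)}(\theta,\phi)=\frac{1}{2\pi}f_\Phi(\phi)$ (since $\Theta$ is uniform on $[0,2\pi]$ and independent of $\Phi$), yielding a double integral with the two indicators $\mathbbm{1}\{\mathsf{V}_a(\theta,\phi)\leq\nu\}$ and $\mathbbm{1}\{(\theta,\phi)\in\mathsf{Cap}\}$.

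The key step is to replace the cap indicator by restricting the $\theta$-integration domain. Here I would invoke the geometric description developed in the proof of Theorem \ref{thm:p_cap} (Appendix B): for each fixed $\phi$ in the visible polar range, the set of $\theta$ satisfying $\sigma(\theta,\phi)\leq\sigma_1$ is an interval of length $L_1(\phi;\sigma_1)$ centered at the user's longitude $\theta_{\rm u}$, by rotational symmetry of $\sigma(\theta,\phi)$ about the user. This converts $\mathbbm{1}\{(\theta,\phi)\in\mathsf{Cap}\}$ into the integration limits $[\theta_{\rm u}-L_1(\phi;\sigma_1)/2,\,\theta_{\rm u}+L_1(\phi;\sigma_1)/2]$. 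The $\phi$-integration can then be extended to $[0,\pi]$ since $f_\Phi(\phi)$ vanishes outside $[\overline{b},\pi-\overline{b}]$ and $L_1$ is defined so that the $\theta$-interval is empty when $\phi\notin[\max\{0,\sigma_1-\phi_{\rm u}\},\,\sigma_1+\phi_{\rm u}]$. Combining these gives exactly the claimed expression.

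Finally, the unconditional CDF $F_{\overline{V}}(\nu)=\tfrac{1}{2}F_{\overline{V}|A}(\nu|{+1})+\tfrac{1}{2}F_{\overline{V}|A}(\nu|{-1})$ follows from the law of total probability using $\mathbb{P}(A=\pm 1)=1/2$ and the independence of $A$ from $(\Theta,\Phi)$. The main obstacle will be to justify cleanly the substitution of the cap indicator by the $\theta$-interval $[\theta_{\rm u}\pm L_1(\phi;\sigma_1)/2]$, which relies on the rotational symmetry argument already established in Appendix B; aside from that, the proof is essentially a careful bookkeeping of the joint distribution and the visibility event, since unlike the gain and delay cases there is no monotone reduction to a one-dimensional problem.
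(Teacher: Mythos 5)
Your proposal is correct and follows essentially the same route as the paper's own proof in Appendix~C: both express the conditional CDF as the expectation of the indicator $\mathbbm{1}\{\mathsf{V}_a(\theta,\phi)\leq\nu\}$ over the visible cap (normalised by $p_{\rm sat}$), convert the cap-membership constraint into $\theta$-integration limits of length $L_1(\phi;\sigma_1)$ centred at $\theta_{\rm u}$ via the symmetry argument from Appendix~B, and retain the Doppler constraint as an indicator because the constant-Doppler contours are intractable. Your write-up is in fact more explicit than the paper's about the $1/(2\pi)$ joint density, the independence of $A$, and the extension of the $\phi$-range to $[0,\pi]$, but there is no substantive difference in approach.
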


The proof is given in Appendix C.


Note that a closed-form PDF of the Doppler shift is challenging to compute in practice, while the CDF above is readily computed numerically. Instead, we approximate it by applying the finite difference method with resolution $\Delta\nu = 2.65$ kHz in the examples to follow.

A simple extension to the Doppler CDF is the delay-Doppler CDF. This will be useful in the next section where we study the joint relationship between delay and Doppler in the proposed channel model.

\begin{corollary}\label{thm:DD_CDF}
     Given a visible satellite, the CDF of its delay and Doppler $(\overline{V},\overline{T})$ conditioned on $A$ is
\begin{align}
    F_{\overline{V},\overline{T}|A}(\nu,\tau|a) &=  F_{\overline{V}|A}(\nu|a)\big \vert_{\sigma_1=\mathsf{T}^{-1}(\tau)}
\end{align}
for all $\nu \in [-\nu_{\max},\nu_{\max}]$, $\tau \in [\tau_{\min},\tau_{\max}]$, and $a \in \{\pm 1\}$. Hence, we also have $F_{\overline{V},\overline{T}|A}(\nu,\tau) = 0.5 F_{\overline{V},\overline{T}|A}(\nu,\tau|{+1}) + 0.5 F_{\overline{V},\overline{T}|A}(\nu,\tau|{-1})$.
\end{corollary}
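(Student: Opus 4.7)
The plan is to follow the same steps used to derive the Doppler CDF in Theorem~\ref{thm:Doppler_CDF}, but with one extra indicator for the delay event. I first write the joint conditional CDF as
\begin{align*}
F_{\overline{V},\overline{T}|A}(\nu,\tau|a) &= \tfrac{1}{p_{\rm sat}} \mathbb{E}\bigl[\mathbbm{1}\{\mathsf{V}_a(\Theta,\Phi)\leq\nu\}\\
&\quad\times \mathbbm{1}\{(\Theta,\Phi)\in\mathsf{Cap}\}\,\mathbbm{1}\{\mathsf{T}(\sigma(\Theta,\Phi))\leq\tau\}\mid A=a\bigr],
\end{align*}
where the prefactor $1/p_{\rm sat}$ comes from conditioning on the satellite being visible.

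Next I invoke Proposition~\ref{prop:delay}: since $\mathsf{T}$ is monotonically increasing in $\sigma$, the event $\{\mathsf{T}(\sigma)\leq\tau\}$ is equivalent to $\{\sigma\leq\mathsf{T}^{-1}(\tau)\}$. For $\tau\in[\tau_{\min},\tau_{\max}]$ we have $\mathsf{T}^{-1}(\tau)\leq\sigma_1$, so this event defines a ``sub-cap'' of the same form as $\mathsf{Cap}$ in \eqref{eq:vis_cap} but with maximum central angle $\mathsf{T}^{-1}(\tau)$ in place of $\sigma_1$. Because this sub-cap is contained in the original visible cap, the two cap indicators collapse into a single indicator of the sub-cap.

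The $\theta$-slice of this sub-cap at polar angle $\phi$ is an arc of length $L_1(\phi;\mathsf{T}^{-1}(\tau))$ centred at $\theta_{\rm u}$, by exactly the geometric computation used in the proof of Theorem~\ref{thm:Doppler_CDF}. Substituting this into the expectation and using that $\Theta$ is uniform on $[0,2\pi]$ and independent of $\Phi$, I obtain the same double integral as in Theorem~\ref{thm:Doppler_CDF} but with $L_1(\phi;\sigma_1)$ replaced by $L_1(\phi;\mathsf{T}^{-1}(\tau))$. The normalisation $p_{\rm sat}$ is unchanged, since it encodes the fixed visibility conditioning rather than any delay-related quantity. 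This is precisely the right-hand side of the corollary, and the unconditional form follows by the law of total probability over $A$ uniform on $\{\pm 1\}$.

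The main subtlety will be the interpretation of the substitution $\sigma_1\mapsto\mathsf{T}^{-1}(\tau)$: it acts only on the cap geometry through $L_1$, not on the normalisation $p_{\rm sat}=p_{\rm cap}(\sigma_1)$. The derivation also relies on $\tau\leq\tau_{\max}$ so that the sub-cap genuinely lies inside the visible cap; otherwise the delay indicator would not correspond to the indicator of a subset of $\mathsf{Cap}$.
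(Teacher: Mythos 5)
Your proof is correct and follows essentially the same route as the paper: the paper's own (one-line) argument likewise rests on $\mathsf{T}$ being one-to-one in $\sigma$, so that the delay event $\{\overline{T}\leq\tau\}$ intersected with visibility is exactly the sub-cap event with maximum central angle $\mathsf{T}^{-1}(\tau)$. Your explicit remark that the substitution $\sigma_1\mapsto\mathsf{T}^{-1}(\tau)$ must act only on the integration region $L_1$ and not on the normalisation $p_{\rm sat}$ is a correct and worthwhile clarification of a point the paper leaves implicit.
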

\begin{proof}
    The delay function $\mathsf{T}$ is one-to-one over its domain and thus a given delay $\tau$ uniquely specifies the user-satellite central angle as $\sigma = \mathsf{T}^{-1}(\tau)$.
\end{proof}

\begin{figure*}
    \centering
    \begin{subfigure}[b]{0.5\textwidth}
        \centering
        \includegraphics[width=1\textwidth]{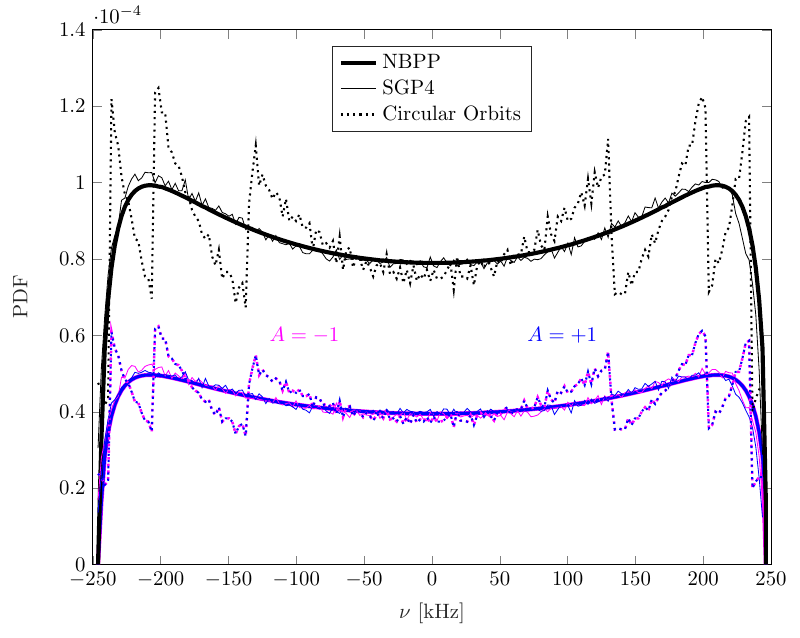}
        \caption{$\phi_{\rm{u}} = 90^\circ$ (equator), $\psi_{\min} = 30^{\circ}$}
    \end{subfigure}%
    ~ 
    \begin{subfigure}[b]{0.5\textwidth}
        \centering
        \includegraphics[width=1\textwidth]{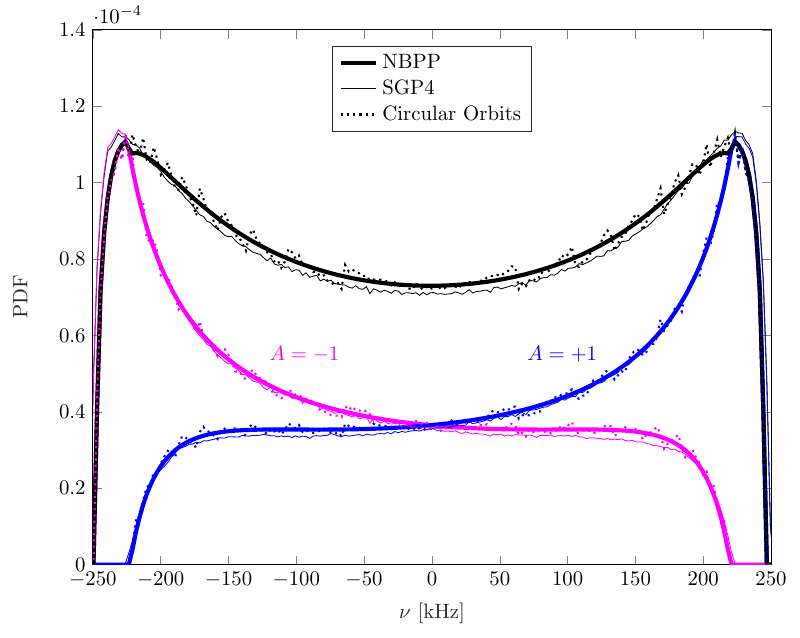}
        \caption{$\phi_{\rm{u}} = 30^\circ$ (edge of human population), $\psi_{\min} = 10^{\circ}$ }
    \end{subfigure}
    \caption{Doppler shift PDF of the NBPP compared with SGP4 and Circular Orbit simulations.}
    \label{fig:DopplerPDF}
\end{figure*}

Consider two extreme examples: a user at the equator with limited visibility due to obstacles ($\phi_{\rm{u}} = 90^\circ$, $\psi_{\min} = 30^\circ$), and a user at the edge of the human population with wider visibility due to less obstacles ($\phi_{\rm{u}} = 30^\circ$, $\psi_{\min} = 10^\circ$). In Fig. \ref{fig:DopplerPDF}(a), the Doppler shift PDF of the NBPP for a user at the equator is symmetric and identical for ascending/descending satellites. In Fig. \ref{fig:DopplerPDF}(b), the Doppler shift PDF of the NBPP for a user at the edge of the human population maintains a similar shape and is symmetric, albeit with slightly more dense peaks. However, in this case the distribution of ascending/descending satellites is asymmetric, but they are identical distributions up to a sign change. Moreover, observe that the Doppler PDFs contain smooth peaks just before $\pm \nu_{\max}$, which differs from Jakes' Doppler spectrum { and the Doppler bound distribution in \cite{AlHourani2024} which have} sharp spikes directly at $\pm \nu_{\max}$. { Furthermore, the strengths of the Doppler bound distribution are that it has a closed-form PDF and its CDF fits well to SGP4 simulations for randomly placed users on Earth. On the other hand, its PDF has zero density at zero Doppler, which is not representative of the true Doppler PDFs in Fig. \ref{fig:DopplerPDF} that have a substantial non-zero density near zero Doppler.}

Comparing the Doppler shift distribution of the NBPP compared with {circular orbits and SGP4 simulations}, we can determine the accuracy of the stochastic satellite model (similar observations apply to gain and delay). Recall that the satellites have rotational angles modelled by a {\em continuous} uniform random variable, which does not capture the discrete ascending nodes spaced by $s_{\rm{orb}}$. { Moreover, this restricts the number of unique orbit trajectories, each with their own statistics, that are visible to a user.} In Fig. \ref{fig:DopplerPDF}, this is observed as {\em bucketing artefacts} that are clusters of Doppler shifts. The NBPP model follows the general trend, but does not the capture the bucketing artefacts. These artefacts are reduced for mega-constellations with smaller orbital spacings. However, at higher latitudes the orbital spacing reduces; this is why the density of satellites is higher at higher latitudes. Hence, the NBPP model is more accurate for users at higher latitudes, as observed by comparing a user at latitude $0^\circ$ in Fig. \ref{fig:DopplerPDF}(a) with a user at latitude $60^\circ$ in Fig. \ref{fig:DopplerPDF}(b). { On the other hand, the additional perturbations in the SGP4 simulations strengthen the validity of the assumption of continuous uniform rotational angles with uncountably many unique orbits. Consequently, the Doppler shift PDF of the NBPP almost perfectly matches the realistic orbit simulations with real-world imperfections.}

\section{Mega-constellation Satellite Channel Model}
A mega-constellation composed of $N$ satellites is modelled according {to} the NBPP from the previous section with satellites at positions $\{(\Theta_i,\Phi_i)\}$ marked with directions $\{\beta_{A_i}(\Phi)\}$. The satellites induce independent gains $\{G_i\}$, delays $\{T_i\}$, and Doppler shifts $\{V_i\}$. We consider the scenario where no more than one visible satellite can transmit to the user, which occurs with probability $p_{\rm{a}}$.

The {\em central control unit (CCU)} of the mega-constellation randomly chooses the $i$-th visible satellite for transmission such that the indicator $I_{i}$ is equal to one and $I_{j}$ equals zero for all $j$ with $j\neq i$. The chosen satellite has spherical coordinates $(R, \tilde{\Theta}, \tilde{\Phi})$ with direction $\beta_{\tilde{A}}(\tilde{\Phi})$. As a function of these coordinates, it has gain $\tilde{G}$, propagation delay $\tilde{T}$, and Doppler shift $\tilde{V}$. With these signal properties, we propose the following {\em linear time-varying (LTV)} channel. 

\begin{definition}[Satellite channel model]\label{def:ch_model}
    The channel model for satellite mega-constellation communications to a user at polar angle $\phi_{\rm{u}}$ is an LTV channel with {\em spreading function}
\begin{align}
      \mathsf{S}(\tau, \nu) 
       &=  \sqrt{\tilde{G}} e^{-j2\pi \tilde{V}}  \delta(\tau - \tilde{T}) \delta(\nu - \tilde{V}) .
\end{align}
\end{definition}

This channel model makes {three simplifications} compared to the real system:

\subsubsection{CCU Independence}
There are many possible inter-satellite handover strategies for the CCU to balance the following desirable properties \cite{Chowdhury2006}:
\begin{itemize}
    \item {\em Maximum service time:} Choose the satellite that can serve the user the longest.
    \item {\em Maximum number of free channels:} Choose the satellite with the maximum number of free channels.
    \item {\em Minimum distance}\footnote{When a shadowing coefficient is side-information at the receiver, the satellite with maximum SNR is often 
 chosen.}: Choose the closest satellite so that the gain is maximised.
\end{itemize}

Rather than choosing one, the proposed channel model is assumed to use a CCU model that chooses a visible satellite independently of its position (e.g., any random visible satellite) from a cap of satellites with acceptable signal properties.

\subsubsection{Snapshot Independence}
The satellite channel model only models the real system at snapshot times $t_n = n\Delta t$ for $n \in \{0,1,2,\ldots\}$, and assumes independence between snapshots as in previous satellite models \cite{Okati2022}. This corresponds to a system that performs satellite handover on the scale of seconds, but in reality the same satellite may be used until it is out of range, on the scale of tens of seconds or minutes. Then, the {\em spreading function} of the real system varies over time as $\mathsf{S}(t;\tau,\nu)$ since the satellites in the constellation move along their orbits deterministically, and thus the snapshots with spreading functions $\mathsf{S}(t_n;\tau,\nu)$ are actually not independent in general. In fact, if we condition on the satellite position and direction at time $t_n$, then we deterministically know its position and direction at time $t_{n+1}$. However, over long time scales we actually do have snapshot independence. Therefore, we should interpret such models with snapshot independence as modelling the long-term statistics of the system, since they are based on the zero-order stationary distribution and not higher-order statistics. In the short-term, we might opt for a deterministic model \cite{Mcbain2025}.

\subsubsection{{No Fading}}{
There are many signal propagation models in the literature for satellite communications under various conditions, e.g., the shadowed-Rician fading model \cite{ShadowedRician2003}. These conditions often include troposphere effects caused by atmospheric gases, rain, fog, and clouds \cite{ITU2019}. Such external effects in the propagation medium, which are generally well understood, are ignored here in order to not obfuscate properties of the satellite channel due to the mega-constellation geometry alone, which is much less understood. However, they are readily combined with Definition \ref{def:ch_model} by additional fading coefficients and propagation paths; {see Appendix \ref{appendix:Rayleigh_fading} for an extension to the channel model with Rayleigh fading}.
}

\subsection{Scattering Function}
Since the satellite channel model is defined by a spreading function that is random, it is said to be a {\em random} LTV channel, where the channel realisation is randomly chosen according to the channel law. Such channels have a simplified description when the scatterers (the satellites) are uncorrelated and the channel taps are jointly wide-sense stationary, referred to as {\em wide-sense stationary uncorrelated scattering (WSSUS)}. With this property, the channel is completely described by its second-order statistics in the form of a two-dimensional {\em scattering function} \cite[Ch. 1]{Hlawatsch2011} that gives a concise description of how transmitted information symbols spread their power in the delay-Doppler domain.

\begin{theorem}\label{thm:scat_func}
    The satellite channel model is WSSUS with scattering function
    \begin{align}
        \mathsf{C}(\tau,\nu) &= \frac{p_{\rm{a}}}{2c^2 \tau^2} \bigg[f_{\overline{V},\overline{T}|A}(\nu,\tau|{+1}) + f_{\overline{V},\overline{T}|A}(\nu,\tau|{-1})\bigg].
    \end{align}
\end{theorem}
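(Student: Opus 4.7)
The plan is to verify that the proposed channel is WSSUS and then read off the scattering function from its defining second-order statistic
\[
  \mathbb{E}[\mathsf{S}(\tau,\nu)\mathsf{S}^*(\tau',\nu')]=\mathsf{C}(\tau,\nu)\,\delta(\tau-\tau')\delta(\nu-\nu').
\]
From Definition \ref{def:ch_model}, the spreading function is identically zero with probability $1-p_{\rm{a}}$, and otherwise is a point mass at the random location $(\tilde{T},\tilde{V})$ carrying magnitude $\sqrt{\tilde{G}}$ and phase $e^{-j2\pi\tilde{V}}$.

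Before taking the correlation, I would quickly dispatch the two WSSUS requirements. Uncorrelated scattering is immediate because the spreading function has support at a single random point, so in the distributional sense $\delta(\tau-\tilde{T})\delta(\tau'-\tilde{T})=\delta(\tau-\tau')\delta(\tau-\tilde{T})$, and analogously for $\nu$. Wide-sense stationarity follows from the snapshot-independence assumption stated in Definition \ref{def:ch_model}: each snapshot draws an independent $(\tilde{\Theta},\tilde{\Phi},\tilde{A})$ from the time-invariant NBPP, so $\mathbb{E}[\mathsf{S}\mathsf{S}^*]$ is a function of relative time lag only.

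To extract $\mathsf{C}(\tau,\nu)$, I would form the expectation directly. The phases cancel because $e^{-j2\pi\tilde{V}}\cdot e^{+j2\pi\tilde{V}}=1$. Conditioning on availability and collapsing the delta products yields
\[
  \mathsf{C}(\tau,\nu)=p_{\rm{a}}\,\mathbb{E}\!\left[\tilde{G}\,\delta(\tau-\tilde{T})\delta(\nu-\tilde{V})\right].
\]
By Propositions \ref{prop:pl} and \ref{prop:delay}, the visible gain and the visible delay are both monotone functions of the same user-satellite central angle, so that $\tilde{G}=1/(c\tilde{T})^{2}$ deterministically. Sifting against the delta in $\tau$ pulls the factor $1/(c^{2}\tau^{2})$ outside the expectation, leaving $\mathbb{E}[\delta(\tau-\tilde{T})\delta(\nu-\tilde{V})]=f_{\overline{T},\overline{V}}(\tau,\nu)$ by the definition of a joint density. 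Finally, the mark $A$ is uniform on $\{\pm 1\}$ independently of all other randomness, so the law of total probability gives $f_{\overline{T},\overline{V}}(\tau,\nu)=\tfrac{1}{2}\sum_{a\in\{\pm 1\}}f_{\overline{V},\overline{T}|A}(\nu,\tau\,|\,a)$, producing the stated formula.

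The main technical obstacle is the rigorous handling of the delta-function products, which is standard in the WSSUS literature (e.g., \cite{Hlawatsch2011}). For a fully self-contained argument, I would smear the deltas by pairing $\mathbb{E}[\mathsf{S}(\tau,\nu)\mathsf{S}^*(\tau',\nu')]$ against arbitrary smooth test functions of $(\tau,\nu,\tau',\nu')$ and identifying $\mathsf{C}$ from the resulting bilinear form. Aside from this, the derivation is routine: the deterministic gain-delay relationship removes $\tilde{G}$, and the rest is the tower property combined with the conditional delay-Doppler density implicit in Corollary \ref{thm:DD_CDF}.
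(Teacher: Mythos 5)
Your proposal is correct and follows essentially the same route as the paper's proof: cancel the phase, collapse the delta products to obtain the $\delta(\tau-\tau')\delta(\nu-\nu')$ factor, use the deterministic gain--delay relation $\tilde{G}=1/(c\tilde{T})^2$ with the sifting property to extract $1/(c^2\tau^2)$, identify $\mathbb{E}[\delta(\tau-\tilde{T})\delta(\nu-\tilde{V})]$ with the joint delay--Doppler density from Corollary~\ref{thm:DD_CDF}, and average over $A$. The only cosmetic difference is that the paper obtains the density by differentiating the conditional CDF via $\frac{d}{dx}\mathbbm{1}\{x\leq y\}=\delta(x-y)$, while you invoke the density definition directly.
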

The proof is given in Appendix D.

\begin{figure*}
\centering
    \begin{subfigure}[b]{0.3\textwidth}
    \centering
    \includegraphics[width=5cm]{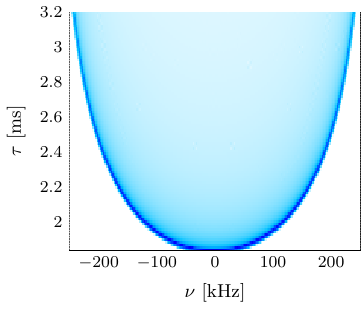}
    \caption{NBPP}
    \end{subfigure}
\quad
    \begin{subfigure}[b]{0.3\textwidth}
    \centering
    \includegraphics[width=5cm]{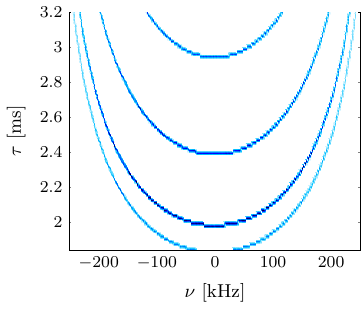}
    \caption{Circular Orbits}
    \end{subfigure}
\quad
    \begin{subfigure}[b]{0.3\textwidth}
    \centering
    \includegraphics[width=5cm]{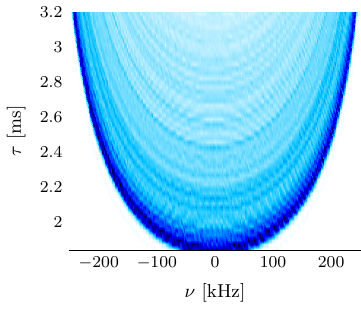}
    \caption{SGP4}
    \end{subfigure}

\caption{ Normalised scattering function of the satellite channel for $\phi_{\rm{u}} = 90^\circ$ (equator), $\psi_{\min} = 30^{\circ}$.}
 \label{fig:scat_func1}
\end{figure*}

\begin{figure*}
\centering
    \begin{subfigure}[b]{0.3\textwidth}
    \centering
    \includegraphics[width=5cm]{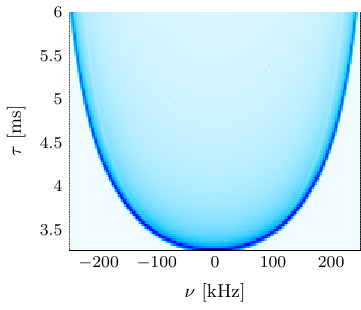}
    \caption{NBPP}
    \end{subfigure}
\quad
    \begin{subfigure}[b]{0.3\textwidth}
    \centering
    \includegraphics[width=5cm]{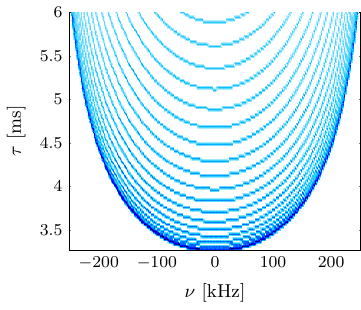}
    \caption{Circular Orbits}
    \end{subfigure}
\quad
    \begin{subfigure}[b]{0.3\textwidth}
    \centering
    \includegraphics[width=5cm]{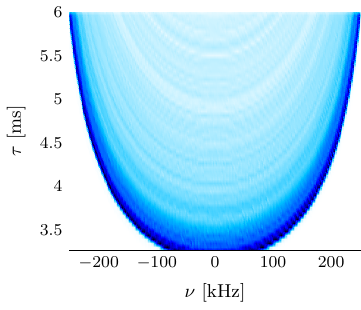}
    \caption{SGP4}
    \end{subfigure}

\caption{ Normalised scattering function of the satellite channel for $\phi_{\rm{u}} = 30^\circ$ (edge of human population), $\psi_{\min} = 10^{\circ}$. }
 \label{fig:scat_func2}
\end{figure*}

Observe that the scattering function in Theorem \ref{thm:scat_func} is proportional to the probability density of delay and Doppler, scaled by the instantaneous gain. As with the Doppler PDF, we approximate the delay-Doppler PDF using the finite difference method with resolutions $\Delta\nu = 2.61$ kHz and $\Delta\tau = 0.028$ ms ($\Delta\nu\Delta\tau = 0.073$) in the examples to follow.

The scattering function is given for a user at the equator in Fig. \ref{fig:scat_func1}, and for a user at the edge of the human population in Fig. \ref{fig:scat_func2}. In Fig. \ref{fig:scat_func1}, the delay spread is $\tau_{\max} - \tau_{\min} = 3.33 - 1.83 = 1.50$ ms and the Doppler spread is {$2\nu_{\max} = 2\times 246.2 = 492.4$} kHz. In Fig. \ref{fig:scat_func2}, the delay spread is $\tau_{\max} - \tau_{\min} = 6.10 - 3.30 = 2.80$ ms and the Doppler spread is {$2\nu_{\max} = 2\times 246.8 = 493.6$} kHz. The support of the delay-Doppler distribution is observed in each scenario as white patches of zero density, surrounding a convex hull of non-zero densities for pairs of delays and Doppler shifts. Observe that the power density is greatest near the edge of the support. This can be explained by the fact that the Doppler PDFs have high densities near their min. and max. Doppler shifts which occur on the boundary of the support. { Consequently, this causes a U-curve due to the strong correlation between delay and Doppler.}
In addition, the bucketing artefacts in Fig. \ref{fig:DopplerPDF} are once again observed in the scattering functions of the {circular orbits}, resulting in blue lines inside the convex hulls {that correspond to the unique orbit trajectories and thus} are reduced for smaller orbital spacings or higher user latitudes. { Observe that the SGP4 simulations do not have these bucketing artefacts due to the additional orbit perturbations, as in the Doppler distributions, and closely match the NBPP. This is strong evidence for the applicability of the stochastic channel model. The following analysis of global channel parameters will make these observations more concrete.}
\vspace{-4mm}

\subsection{Global Channel Parameters}

Oftentimes, we are only interested in global channel parameters of the channel. These parameters include the path loss (the average power gain of a random satellite) and the channel spread (the spread of the channel in terms propagation delay and Doppler shift). These are all properties of the scattering function.

\begin{definition}
    For the satellite channel model, the path loss is $P_L = - 10\log_{10}(\rho^2)$ with
    \begin{align}
        \rho^2 = p_{\rm{a}}\mathbb{E}[\overline{G} ]  ,
    \end{align}
    the RMS delay spread is
    \begin{align}
    \sigma_{\tau} = \rho^{-1} \sqrt{ p_{\rm{a}}\mathbb{E}[(\overline{T} - \overline{\tau})^2 \overline{G} ]  }  
    \end{align}
    with mean delay 
        \begin{align}
    \overline{\tau} = \frac{p_{\rm{a}}}{\rho^2}\mathbb{E}[\overline{T} \cdot\overline{G} ]    ,
    \end{align}
        and the RMS Doppler spread is
    \begin{align}
    \sigma_{\nu} =  \rho^{-1} \sqrt{p_{\rm{a}}\mathbb{E}[\overline{V}^2 \cdot \overline{G} ]}
    \end{align}
    with mean Doppler shift $\overline{\nu}=0$. These are readily computed using the scattering function from Theorem \ref{thm:scat_func}.
\end{definition}

\begin{table*}[]
    \centering
    \begin{tabular}{ccccccc}
\toprule
& \multicolumn{2}{c@{}}{NBPP} & \multicolumn{2}{c@{}}{Circular Orbits} & \multicolumn{2}{c@{}}{SGP4}\\
\cmidrule(l){2-3}
\cmidrule(l){4-5}
\cmidrule(l){6-7}
\multirow{2}{*}[0pt]{\textbf{Parameters}} & 
$\phi_{\rm{u}}=90^\circ$ & $\phi_{\rm{u}}=30^\circ$  &$\phi_{\rm{u}}=90^\circ$ & $\phi_{\rm{u}}=30^\circ$  &$\phi_{\rm{u}}=90^\circ$ & $\phi_{\rm{u}}=30^\circ$ \\
 & 
$\psi_{\min}=30^{\circ}$ & $\psi_{\min}=10^{\circ}$ &$\psi_{\min}=30^{\circ}$ & $\psi_{\min}=10^{\circ}$ &$\psi_{\min}=30^{\circ}$ & $\psi_{\min}=10^{\circ}$\\
\cmidrule(l){1-1}
\cmidrule(l){2-3}
\cmidrule(l){4-5}
\cmidrule(l){6-7}

Path loss, $P_L$ & $117.6$ dB & $122.6$ dB &$117.6$ dB &$122.6$ dB & $117.6$ dB & $122.5$ dB\\
Mean delay, $\overline{\tau}$ & $2.5$ ms & $4.5$ ms &$2.5$ ms &$4.4$ ms & $2.5$ ms & $4.4$ ms\\
RMS delay spread, $\sigma_{\tau}$ & $0.43$ ms & $0.80$ ms &$0.43$ ms &$0.80$ ms & $0.43$ ms & $0.82$ ms\\
Mean Doppler, $\overline{\nu}$ & $0$ kHz & $0$ kHz &$-0.12$ kHz&$-0.08$ kHz& $-0.86$ kHz& $0.16$ kHz \\
RMS Doppler spread, $\sigma_{\nu}$ & $134.5$ kHz & $137.9$ kHz &$134.2$ kHz &$138$ kHz & $137.9$ kHz & $140.5$ kHz\\

\bottomrule
\end{tabular}
    \caption{Global channel parameters of the satellite channel.}
    \label{tab:global_ch_params}
\end{table*}

These are the classical definitions that average the global channel parameters with respect to a measure of the probability density of a satellite position scaled by the gain at that position, explaining the presence of $\overline{G}$ in the above definitions. Another perspective is to normalise the scattering function as $\mathsf{C}(\tau,\nu)/\rho^2$, giving a probability measure whose first and second moments correspond to the given definitions.

    \begin{proposition}
        The path loss of the satellite channel model is $P_L = - 10\log_{10}(\rho^2)$ with
        \begin{align}
    \rho^2 &= \frac{p_{\rm{a}}}{p_{\rm{sat}}}\int_{g_{\min}}^{g_{\max}} p_{\rm{cap}}\left(\mathsf{G}^{-1}(g)\right) dg .
\end{align}
    \end{proposition}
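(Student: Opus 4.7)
The plan is to reduce the proposition to Theorem \ref{thm:cdfs}, which already gives the CDF of the visible gain $\overline{G}$ in closed form. By definition $\rho^2 = p_{\rm a}\mathbb{E}[\overline{G}]$, so the only work is to express $\mathbb{E}[\overline{G}]$ as an integral of $p_{\rm cap}\!\circ\!\mathsf{G}^{-1}$ over $[g_{\min},g_{\max}]$.

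First I would invoke the tail-integral representation for the nonnegative, bounded random variable $\overline{G}$,
\begin{equation}
\mathbb{E}[\overline{G}] \;=\; \int_{0}^{\infty}\!\bigl(1-F_{\overline{G}}(g)\bigr)\,dg,
\end{equation}
and split the range of integration according to the support $[g_{\min},g_{\max}]$ of $\overline{G}$: on $[0,g_{\min})$ the integrand equals $1$, on $(g_{\max},\infty)$ it equals $0$, and on $[g_{\min},g_{\max}]$ Theorem \ref{thm:cdfs} provides $1-F_{\overline{G}}(g)=p_{\rm cap}(\mathsf{G}^{-1}(g))/p_{\rm sat}$. Multiplying by $p_{\rm a}$ then yields the stated identity, after noting that the boundary contribution vanishes because $p_{\rm cap}(\mathsf{G}^{-1}(g_{\max}))=p_{\rm cap}(\sigma_{\min})=0$ and $p_{\rm cap}(\mathsf{G}^{-1}(g_{\min}))=p_{\rm cap}(\sigma_{\max})=p_{\rm sat}$, so the integrand matches the ``$1$'' on $[0,g_{\min}]$ seamlessly at the endpoint $g_{\min}$.

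An equivalent route, which I would present as a cross-check, is integration by parts directly in the $\sigma$ variable: writing $\mathbb{E}[\overline{G}]=\int_{\sigma_{\min}}^{\sigma_{\max}}\mathsf{G}(\sigma)\,p'_{\rm cap}(\sigma)/p_{\rm sat}\,d\sigma$, integrating by parts with $u=\mathsf{G}(\sigma)$ and $dv=p'_{\rm cap}(\sigma)/p_{\rm sat}\,d\sigma$, using the endpoint values of $p_{\rm cap}$ from Theorem \ref{thm:p_cap}, and finally applying the substitution $g=\mathsf{G}(\sigma)$ (with the monotonicity from Proposition \ref{prop:pl}) transforms the remaining integral into $\int_{g_{\min}}^{g_{\max}} p_{\rm cap}(\mathsf{G}^{-1}(g))\,dg$.

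The main obstacle I anticipate is purely bookkeeping at the boundaries: $\sigma_{\min}$ is only zero when $\phi_{\rm u}\geq\overline{b}$, so in the high-latitude regime one must verify that $p_{\rm cap}(\sigma_{\min})$ still vanishes (which it does by the definition of the visible cap, since the cap of angle $\sigma_{\min}$ contains no satellites). Once that boundary behaviour is checked, substituting $F_{\overline{G}}$ and reorganising the constants $p_{\rm a}/p_{\rm sat}$ is mechanical and completes the proof.
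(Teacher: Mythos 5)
Your route is the same as the paper's: the printed proof is literally the one-line instruction to combine the tail-integral identity $\mathbb{E}[\overline{G}]=\int_0^\infty(1-F_{\overline{G}}(g))\,dg$ with Theorem~\ref{thm:cdfs}, which is exactly your first derivation. The boundary values you quote, $p_{\rm cap}(\mathsf{G}^{-1}(g_{\max}))=p_{\rm cap}(\sigma_{\min})=0$ and $p_{\rm cap}(\mathsf{G}^{-1}(g_{\min}))=p_{\rm cap}(\sigma_{\max})=p_{\rm sat}$, are also correct in both latitude regimes.

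The gap is in the sentence claiming that ``the boundary contribution vanishes.'' By your own decomposition the tail integral equals
\begin{equation}
\int_0^{g_{\min}}1\,dg+\int_{g_{\min}}^{g_{\max}}\frac{p_{\rm cap}(\mathsf{G}^{-1}(g))}{p_{\rm sat}}\,dg
= g_{\min}+\frac{1}{p_{\rm sat}}\int_{g_{\min}}^{g_{\max}}p_{\rm cap}\bigl(\mathsf{G}^{-1}(g)\bigr)\,dg ,
\end{equation}
and the first term is $g_{\min}=\mathsf{G}(\sigma_{\max})=d_{\max}^{-2}>0$. Continuity of the integrand at $g=g_{\min}$ (the ``seamless match'' you invoke) says nothing about the size of $\int_0^{g_{\min}}1\,dg$; that piece is a genuine additive contribution $p_{\rm a}g_{\min}$ to $\rho^2$, and since the remaining integral is bounded above by $g_{\max}-g_{\min}$, the omitted term is of the same order of magnitude as the term you keep, so it cannot be dismissed. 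Your proposed cross-check actually detects this: integrating $\int_{\sigma_{\min}}^{\sigma_{\max}}\mathsf{G}(\sigma)\,dp_{\rm cap}(\sigma)/p_{\rm sat}$ by parts leaves the surviving boundary term $\mathsf{G}(\sigma_{\max})p_{\rm cap}(\sigma_{\max})/p_{\rm sat}=g_{\min}$. So as written your argument does not establish the displayed formula; it establishes that formula plus $p_{\rm a}g_{\min}$. (The proposition as printed, and the paper's one-line proof, share the same issue, so the honest fix is to carry the $p_{\rm a}g_{\min}$ term rather than to argue it away.)
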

    \begin{proof}
        Use the identity $\mathbb{E}[\overline{G}] = \int_{0}^{\infty} (1-F_{\overline{G}}(g)) dg$ and Theorem \ref{thm:cdfs}.
    \end{proof}

Consider the global channel parameters in Table \ref{tab:global_ch_params} for two extremes of user latitudes: a user at the equator and a user at the edge of the population. These results say that the mean delay is $2.5-4.5$ ms, highlighting the high latency in LEO satellite communications. In addition, the path loss is $118$ dB in the best case, and an additional $5$ dB in the worst case, requiring a significant transmission power to overcome.

Now let us consider the {\em channel spread} $2\sigma_{\tau} \sigma_{\nu}$. A channel is {\em underspread} if the channel spread is less than one, which is often the case in mobile communication channels. The underspread property describes channels that do not jointly spread far from the mean in terms of delay and Doppler shift; the channel may spread significantly in either delay or Doppler shift, but not both together. The examples in Table \ref{tab:global_ch_params} give a channel spread in the hundreds, indicating that the satellite channel {(without compensation)} is highly overspread. That is, if we draw a bounding box according to $\sigma_{\tau}$ and $\sigma_{\nu}$ in the delay-Doppler domain, the power leakage outside of this box is very high. This is due to the strong coupling between delay and Doppler shift since they increase together at a similar rate.

{Finally, observe in Table \ref{tab:global_ch_params} that the global channel parameters of the stochastic model are very close to the global channel parameters of the circular orbits and SGP4 simulations. This supports our earlier observations that the scattering functions are similar.}

\section{Conclusion}
In conclusion, a satellite channel model was proposed that models communications between a LEO mega-constellation, with rapidly-moving LEO satellites, and a stationary user on Earth. The channel model was derived using the stochastic geometry of satellites randomly distributed on a satellite sphere according to a NBPP, marked with random directions to model the trajectories of their orbits. A stochastic satellite mega-constellation channel model was proposed using the derived probability distributions of signal propagation properties including channel gain, propagation delay, and Doppler shift. 

Since the Doppler shift was based on the mega-constellation geometry, it improves upon the commonly used Jakes' Doppler spectrum in the scenario of LEO satellite communications. { The derived Doppler shift distributed revealed new properties that had not been studied in the literature, such as the dependence on ascending/descending and the statistics of individual orbit trajectories.} The generality of this characterisation makes it readily extended to a Doppler shift function $\mathsf{V}'(\theta,\phi) = \mathsf{V}(\theta,\phi) - \mathsf{V}_0(\theta,\phi)$ with pre-compensation $\mathsf{V}_0(\theta,\phi)$ (e.g., the Doppler shift at the centre of a beam). The statistics of mega-constellation communications channel were characterised by the scattering function of the stochastic channel model. {Since the Doppler shift distribution and the scattering function tightly matched realistic simulations using the SGP4 orbit simulator, the channel model is suitable for modelling the long-term behaviour of the channel. For modelling the short-term behaviour, the stochastic modelling approach could be extended to semi-stochastic modelling as in \cite{Mcbain2025}.} 

{ In addition, the scattering function and its properties demonstrated {\em how}} satellite communications is highly overspread in the delay-Doppler domain. Since wireless channels in the literature are typically underspread, this suggests the need for new approaches---such as delay-Doppler communications \cite{OTFS2022}---and { offers new insights to handle this unique} setting of highly mobile satellites.

\appendices

\section{Proof of Theorem \ref{thm:doppler}}

Consider the special case of a satellite moving in direction $\beta$ in the $\boldsymbol{\hat{\theta}}$-$\boldsymbol{\hat{\phi}}$ plane. The satellite velocity vector is 
\begin{align}
    \mathbf{v} &= v(\cos(\beta) \boldsymbol{\hat{\theta}} - \sin(\beta) \boldsymbol{\hat{\phi}})\\
    &= v\begin{bmatrix}
           -\cos(\beta)\sin(\theta) -\sin(\beta)\cos(\phi)\cos(\theta)\\
           \cos(\beta)\cos(\theta) -\sin(\beta)\cos(\phi)\sin(\theta) \\
           \sin(\beta)\sin(\phi)
         \end{bmatrix} ,
\end{align}
the line-of-sight vector from the { user to the satellite} is
\begin{align}
    \mathbf{d} &= \mathbf{s} - \mathbf{u}\\
    &= \begin{bmatrix}
           R\sin(\phi)\cos(\theta) \\
           R\sin(\phi)\sin(\theta) - r\sin(\phi_{\rm{u}})\\
           R\cos(\phi) - r\cos(\phi_{\rm{u}})
         \end{bmatrix} ,
\end{align}
and the user-satellite distance is $||\mathbf{d}||=\sqrt{r^2 + R^2 - 2rR \cos(\sigma(\theta,\phi) )}$. The Doppler shift function is the projection of $\mathbf{v}$ in the direction of $\mathbf{d}$ as $\mathbf{v} \cdot \hat{\mathbf{d}}$. 

However, the direction $\beta$ depends on the polar angle of the satellite on its orbit. Consider the spherical triangle of the satellite at $(R,\theta,\phi)$, the north pole at $(R,0,0)$, and the intersection point $(R,\pi/2,\overline{b})$ between the orbital plane (with ascending node at zero longitude) and the y-z plane. By the spherical sine rule, we have

\begin{align}
    \frac{\sin(\pi/2 - |\beta|)}{\sin(\pi/2 - b)} &= \frac{\sin(\pi/2)}{\sin(\phi)}\\
    \Rightarrow |\beta| &= \cos^{-1}\left(\frac{\cos(b)}{\sin(\phi)}\right)
\end{align}
which is the direction for ascending satellites, otherwise it is the negative.

\section{Proof of Theorem \ref{thm:p_cap}}
A satellite is visible by the user if it is in the cap surface $\mathsf{Cap}$ according to central angle $\sigma_1$. More generally, consider a cap surface according to central angle $\sigma$. For a fixed polar angle $\phi$, the visible surface becomes a visible curve at each polar angle $\phi$ for a {$\theta$ that is in an interval with endpoints on the edge of the cap and whose length is denoted by $L(\phi; \sigma)$}. Note that we only require an interval since $\mathsf{Cap}$ is a convex set with respect to $\theta$ and $\phi$. Therefore, the interval length is specified by finding the distance between the two roots of $\sigma(\theta,\phi) = \sigma_1$ with respect to $\theta$, and when the roots do not exist the length is $2\pi$ which corresponds to latitude lines that do not hit the cap boundary. This yields
\begin{align}
    &L(\phi; \sigma)\notag\\
    &= \begin{cases} 
      2\pi & 0 \leq \phi \leq \max\{0, \sigma_1 - \phi_{\rm{u}}\} \\
       L_1(\phi; \sigma) & \max\{0, \sigma_1 - \phi_{\rm{u}}\} < \phi < \phi_{\rm{u}} + \sigma_1 \\
      0 & \text{otherwise}\end{cases} 
\end{align}
as the piecewise arc length function for the non-sliced cap. Conditioned on polar angle $\phi$, the probability of the visible curve is $ L(\phi; \sigma)/2\pi$, since $\Theta$ is uniformly distributed.

Let $f_{\Theta, \Phi}(\theta, \phi) = f_{\Phi}(\phi)/2\pi$ be the density function of an NBPP on a spherical surface. The probability of observing a single satellite in a cap $\mathsf{Cap}(\sigma)$ of central angle $\sigma$ is the surface integral
\begin{align}
    p_{\rm{cap}}(\sigma) &= \int_{\mathsf{Cap}(\sigma)} f_{\Theta, \Phi}(\theta,\phi)  d\theta d\phi\\
    &= \frac{1}{2\pi}\int_{\mathsf{Cap}(\sigma)} f_{\Phi}(\phi)   d\theta d\phi\\
    &= \frac{1}{2\pi}\int_{0}^{\pi} f_{\Phi}(\phi) L(\phi; \sigma) d\phi
\end{align}
for $0 \leq \phi_{\rm{u}} \leq \phi/2$ since each polar angle with $\pi/2 < \phi_{\rm{u}} \leq \pi$ corresponds to the probability at polar angle $\pi - \phi$.

\section{Proof of Theorem \ref{thm:Doppler_CDF}}

A satellite in the user's visible cap must satisfy the inequality
$\sigma(\theta,\phi) \leq \sigma_1$, for $\overline{b} \leq \phi \leq \pi - \overline{b}$. The boundary of this cap
(the thick black curves in Fig. \ref{fig:integration_region}) is found by solving this with equality, which can be expressed in terms of $L_1(\phi; \sigma_1)$, the arc length between the two points on the boundary of the cap at polar angles $\phi$, due to symmetry about $\theta_{\rm{u}}$. This corresponds to the region of integration over the cap. The Doppler CDF requires integrating above a curve that corresponds to a constant Doppler shift (see the coloured contours in Fig. \ref{fig:integration_region}). Since this is intractable, we include an indicator function in the integrand to indirectly specify an additional constraint on the region of integration over the cap. This gives the CDF $F_{\overline{V}|A}(\nu|a) = \mathbb{E}[\mathbbm{1}{\{\mathsf{V}_a(\theta,\phi) \leq \nu\}}|\mathbbm{1}{\{(\Theta,\Phi) \in \mathsf{Cap}\}} = 1]$.

\begin{figure*}
    \centering
    \begin{subfigure}[b]{0.45\textwidth}
        \centering
        \includegraphics[width=0.95\textwidth]{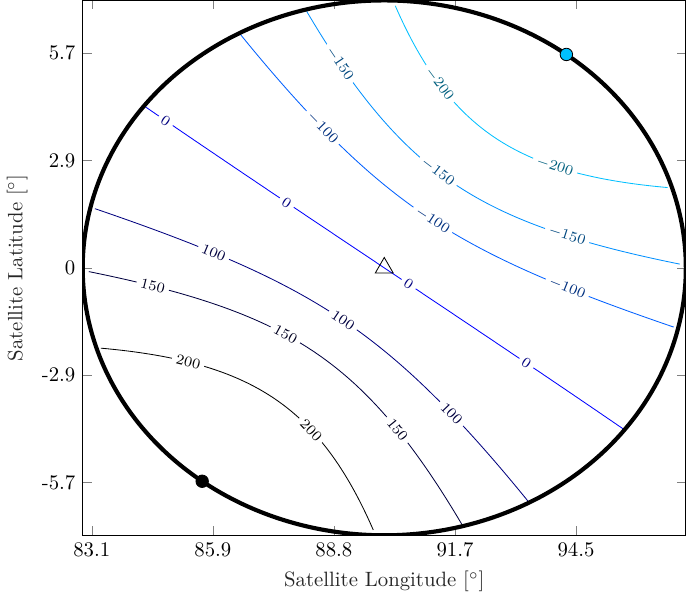}
        \caption{$\phi_{\rm{u}} = 90^{\circ}$, $\psi_{\min} = 30^{\circ}$, $A=+1$}
    \end{subfigure}%
    ~ 
    \begin{subfigure}[b]{0.45\textwidth}
        \centering
        \includegraphics[width=0.95\textwidth]{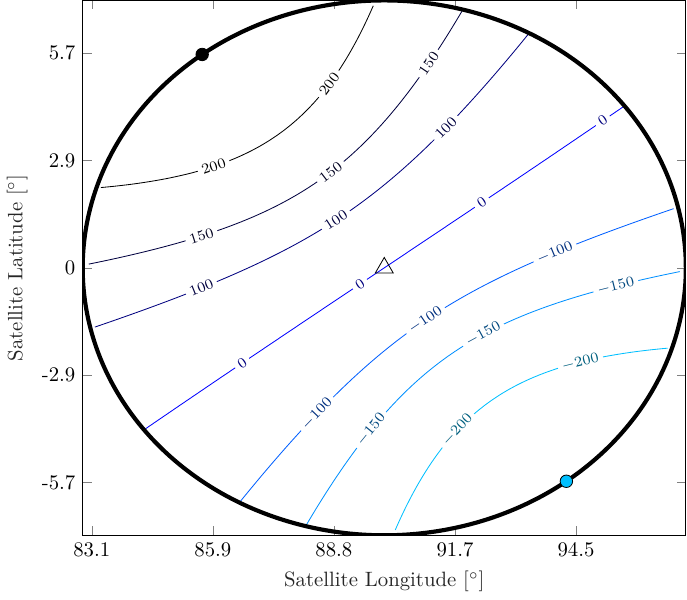}
        \caption{$\phi_{\rm{u}} = 90^{\circ}$, $\psi_{\min} = 30^{\circ}$, $A=-1$}
    \end{subfigure}

        \bigskip
    \begin{subfigure}[b]{0.45\textwidth}
        \centering
        \includegraphics[width=0.95\textwidth]{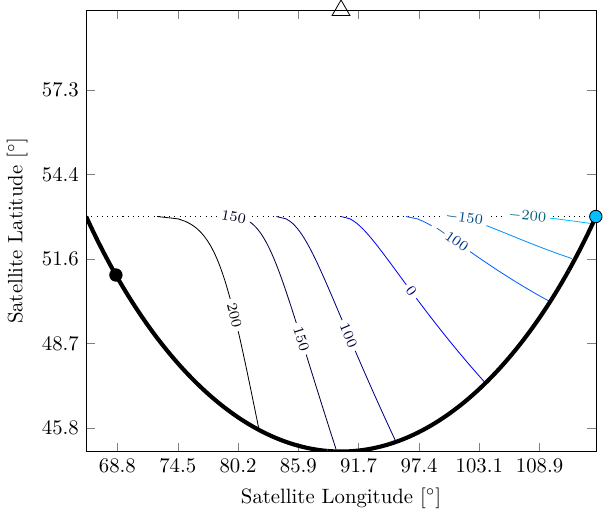}
        \caption{$\phi_{\rm{u}} = 30^{\circ}$, $\psi_{\min} = 10^{\circ}$, $A=+1$}
    \end{subfigure}
        ~ 
    \begin{subfigure}[b]{0.45\textwidth}
        \centering
        \includegraphics[width=0.95\textwidth]{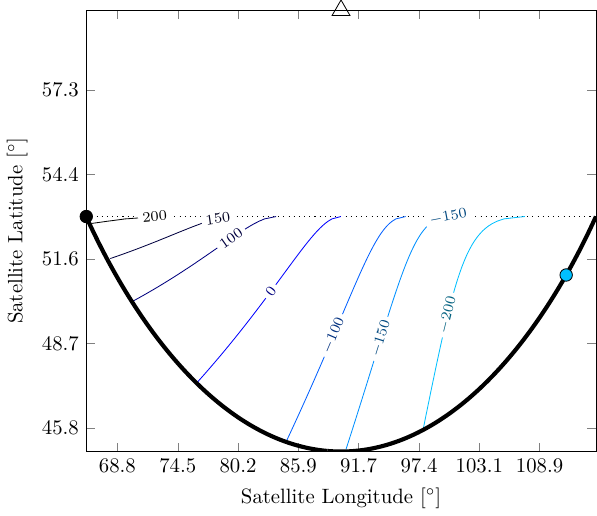}
        \caption{$\phi_{\rm{u}} = 30^{\circ}$, $\psi_{\min} = 10^{\circ}$, $A=-1$}
    \end{subfigure}
    \caption{{The integration region $\mathsf{Cap}$ with contours of the Doppler shift function (in kHz). The thick black curve is the boundary of $\mathsf{Cap}$. The min. Doppler shift is at the blue circle marker and the max. Doppler shift is at the black circle marker. The user is located at the triangle marker.}}
    \label{fig:integration_region}
\end{figure*}

\section{Proof of Theorem \ref{thm:scat_func}}

The scattering function of a WSSUS channel satisfies
\begin{align}
    &\mathbb{E}[\mathsf{S}(\tau,\nu) \mathsf{S}^*(\tau',\nu')]\notag\\
    &= \mathbb{E}[\mathsf{G}(\mathsf{T}^{-1}(\tau)) \delta(\tau - \tilde{T}) \delta(\nu - \tilde{V}) \delta(\tau' - \tilde{T}) \delta(\nu' - \tilde{V})]\\
    &= \mathbb{E}[\mathsf{G}(\mathsf{T}^{-1}(\tau)) \delta(\tau - \tilde{T})\delta(\nu - \tilde{V})] \delta(\tau - \tau') \delta(\nu - \nu') \\
    &= \mathsf{C}(\tau,\nu) \delta(\tau - \tau') \delta(\nu - \nu')  .
\end{align}

Observe that
\begin{align}
    &\mathsf{C}_a(\tau,\nu) \notag\\
    &:= \mathsf{G}(\mathsf{T}^{-1}(\tau)) \mathbb{E}[\delta(\tau - \tilde{T})\delta(\nu - \tilde{V})]\\
    &= p_{\rm{a}}\mathsf{G}(\mathsf{T}^{-1}(\tau))\notag\\
    &\mathbb{E}[ \delta(\tau - \mathsf{T}(\sigma(\Theta,\Phi)))\delta(\nu - \mathsf{V}_a(\Theta,\Phi))| \mathbbm{1}{\{(\Theta,\Phi)\in\mathsf{Cap}\}} = 1]\\
    &= p_{\rm{a}} \mathsf{G}(\mathsf{T}^{-1}(\tau)) \frac{\partial^2}{\partial \nu \partial \tau} F_{\overline{V},\overline{T}|A}(\nu,\tau|a)\\
   &= p_{\rm{a}} \mathsf{G}(\mathsf{T}^{-1}(\tau)) f_{\overline{V},\overline{T}|A}(\nu,\tau|a) 
\end{align}
where we differentiated inside the integral (the expectation) and used the identity $\frac{d}{dx} \mathbbm{1}\{x \leq y\} = \delta(x - y)$ (where the indicator function is treated as a distribution) to substitute the joint PDF from Corollary \ref{thm:DD_CDF}. Then $\mathsf{C}(\tau,\nu) = 0.5\mathsf{C}_{+1}(\tau,\nu) + 0.5\mathsf{C}_{-1}(\tau,\nu)$.

\section{Extended Version of the Stochastic Channel Model with Rayleigh Fading}\label{appendix:Rayleigh_fading}

As a concrete example, let us consider how to include Rayleigh fading into the stochastic channel model. Consider a Rayleigh random variable $X$ with scale parameter $\sigma^2_{\rm Rayleigh}=1/2$. The power is $Z=X^2$ with exponential PDF $f_{Z}(z) = e^{-z}$. Including the free-space path loss, we have the updated channel gain power as
\begin{align}
    Y = Z \tilde{G} .
\end{align}

Now let us consider how to update the probability distribution and the scattering function:

1) The channel gain CDF for a visible satellite is now
\begin{align}
    F_{Y}(y) &=\mathbb{P}(Y \leq y)\notag\\
    &=\mathbb{P}(\overline{G} \leq y/Z)\notag\\
    &= \mathbb{E}[F_{\overline{G}}(y/Z)] \label{eq:gen_gain_CDF}\\
    &= \int_{0}^{\infty} e^{-z} F_{\overline{G}}(y/z) dz \notag\\
    &= \int_{y/g_{\max}}^{\infty} e^{-z} \left[1 - \frac{p_{\rm cap}(\mathsf{G}^{-1}(y/z))}{p_{\rm sat}}\right] dz \notag\\
    &= e^{-y/g_{\min}} - \frac{1}{p_{\rm sat}}\int_{y/g_{\max}}^{y/g_{\min}} e^{-z} p_{\rm cap}(\mathsf{G}^{-1}(y/z)) dz
\end{align}
where we used our Theorem 4 to substitute $F_{\overline{G}}(y/z) = 1 - p_{\rm cap}(\mathsf{G}^{-1}(y/z))/p_{\rm sat}$. This CDF is shown in Fig. \ref{fig:CDF_Rayleigh_gain}.

2) The scattering function is now denoted by $\mathsf{C}'(\tau,\nu)$, which relates to the original scattering function $\mathsf{C}(\tau,\nu)$ without fading as
\begin{align}
    \mathsf{C}'(\tau,\nu) &= \mathbb{E}[Z \mathsf{G}(\mathsf{T}^{-1}(\tau)) \delta(\tau - \tilde{T})\delta(\nu - \tilde{V})]  \notag\\
    &= \mathbb{E}[Z \delta(\tau - \tilde{T})\delta(\nu - \tilde{V})] \mathsf{C}(\tau,\nu)\label{eq:gen_scat_func}\\
    &= \mathsf{C}(\tau,\nu) \notag
\end{align}
where we substituted $\mathbb{E}[Z \delta(\tau - \tilde{T})\delta(\nu - \tilde{V})]=\mathbb{E}[Z]=1$. That is, there is no change to the scattering function since the average power of the fading coefficients is $1$ and they are independent of delay $\tilde{T}$ and Doppler $\tilde{V}$. 

\begin{remark}For a shadowing model with general parameters dependent on satellite position, the expectations in (\ref{eq:gen_gain_CDF}) and (\ref{eq:gen_scat_func}) for the channel gain CDF and scattering function, respectively, would not necessarily simplify as in the no-fading case where channel gain was solely dependent on the central angle $\sigma=\sigma(\tilde{\Theta},\tilde{\Phi})$. However, if the fading parameters only depend on central angle (or elevation angle), these expectations can still be computed in terms of $p_{\rm cap}(\sigma)$ almost identically to the Rayleigh fading case. Otherwise, these expectations would have to be evaluated similarly to the Doppler shift distribution in Theorem 5 without any simplifications, recalling that the Doppler shift function did not solely depend on the central angle.

\begin{figure}
    \centering
    \includegraphics[width=1\linewidth]{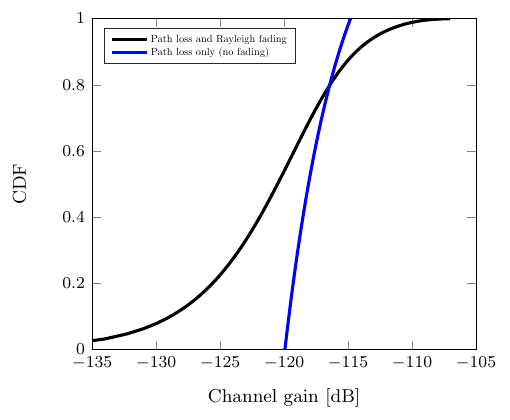}
    \caption{CDF of channel gain with and without Rayleigh fading with scaling parameter $\sigma^2_{\rm Rayleigh}=1/2$ for the NBPP with respect to a user at $\phi_{\rm u}=90^\circ$ with $\psi_{\rm min}=30^\circ$.}
    \label{fig:CDF_Rayleigh_gain}
\end{figure}

\end{remark}


\ifCLASSOPTIONcaptionsoff
  \newpage
\fi

\bibliography{refs}
\bibliographystyle{IEEEtran}

\end{document}